
\documentclass[letterpaper, 11pt]{article}

\usepackage{latexsym,amssymb,amsmath,amsthm,fullpage}

\newcommand{\mnote}[2]
{\mbox{}\marginpar%
[{\tiny\it\begin{minipage}[t]{\marginparwidth}\raggedleft#1%
\end{minipage}}]%
{\tiny\it\begin{minipage}[t]{\marginparwidth}\raggedright#1%
\end{minipage}}%
{}#2}

\newtheorem{theorem}{Theorem}[section]
\newtheorem{lemma}[theorem]{Lemma}

\newtheorem{corollary}[theorem]{Corollary}

\newtheorem{claim}[theorem]{Claim}

\theoremstyle{remark}

\theoremstyle{definition}
\newtheorem{definition}[theorem]{Definition}

\newcommand{\Real}{\mathbb R}
\newcommand{\expec}{{\mathbb{E}}}
\newcommand {\roundup}[1] {{\lceil {#1} \rceil}}
\newcommand {\rounddown}[1] {{\lfloor {#1} \rfloor}}
\newcommand{\eps}{\varepsilon}
\newcommand{\triv}{\gamma}
\newcommand{\prob}{{\rm Pr}}
\newcommand{\frc}{{\rm fr}}
\newcommand{\dks}{DkS}

\newtheorem*{Gnp-claim}{Claim \ref{G(n,p)-bound}}
\newtheorem*{dkslocal-claim}{Claim \ref{clm:dkslocal}}
\newtheorem*{expand-lemma}{Lemma \ref{lem:expand-guarantee}}
\newtheorem*{contract-lemma}{Lemma \ref{lem:contract-guarantee}}
\newtheorem*{main-cor}{Corollary \ref{cor:main}}

\begin{document}

\begin{titlepage}
\title{Detecting High Log-Densities -- an $O(n^{1/4})$ Approximation
  for Densest $k$-Subgraph}

\author{Aditya Bhaskara \thanks{Department of Computer Science, Princeton University, supported by NSF awards MSPA-MCS 0528414, CCF 0832797, and AF 0916218. Email: \textsf{bhaskara@cs.princeton.edu}} \and
Moses Charikar \thanks{Department of Computer Science, Princeton
University, supported by NSF awards MSPA-MCS 0528414, CCF 0832797, and AF 0916218. Email:
\textsf{moses@cs.princeton.edu}} \and Eden Chlamtac
\thanks{Weizmann Institute of Science, Rehovot, Israel, supported
by a Sir Charles Clore Postdoctoral Fellowship. Email:
\textsf{eden.chlamtac@weizmann.ac.il}}\and Uriel Feige
\thanks{Weizmann Institute of Science, Rehovot, Israel. Email:
\textsf{uriel.feige@weizmann.ac.il}. The author holds the Lawrence
G. Horowitz Professorial Chair at the Weizmann Institute. Work
supported in part by The Israel Science Foundation (grant No.
873/08).}\and Aravindan Vijayaraghavan
\thanks{Department of Computer Science, Princeton University,
supported by NSF awards MSPA-MCS 0528414, CCF 0832797, and AF 0916218. Email:
\textsf{aravindv@cs.princeton.edu}}}
\date{}

\maketitle

\begin{abstract}
In the Densest $k$-Subgraph problem, given a graph $G$ and a parameter
$k$, one needs to find a subgraph of $G$ induced on $k$ vertices that
contains the largest number of edges.  There is a significant gap
between the best known upper and lower bounds for this problem.  It is
NP-hard, and does not have a PTAS unless NP has subexponential time
algorithms.  On the other hand, the current best known algorithm of
Feige, Kortsarz and Peleg~\cite{FKP}, gives an approximation ratio of
$n^{1/3 - \eps}$ for some specific $\eps > 0$ (estimated by those
authors at around $\eps = 1/60$).

We present an algorithm that for every $\eps> 0$ approximates the
Densest $k$-Subgraph problem within a ratio of $n^{1/4 + \eps}$ in
time $n^{O(1/\eps)}$. If allowed to run for time $n^{O(\log n)}$, our
algorithm achieves an approximation ratio of $O(n^{1/4})$.  Our
algorithm is inspired by studying an average-case version of the
problem where the goal is to distinguish random graphs from random
graphs with planted dense subgraphs -- the approximation ratio we
achieve for the general case matches the ``distinguishing ratio'' we
obtain for this planted problem. Achieving a distinguishing ratio of
$o(n^{1/4})$ for the planted problem (in polynomial time) is beyond
the reach of our current techniques.

At a high level, our algorithms involve cleverly counting
appropriately defined trees of constant size in $G$, and using these
counts to identify the vertices of the dense subgraph.  Our algorithm
is based on the following principle. We say that a graph $G(V,E)$ has
log-density $\alpha$ if its average degree is
$\Theta(|V|^{\alpha})$. The algorithmic core of our result is a family
of algorithms that output $k$-subgraphs of nontrivial density whenever
the log-density of the densest $k$-subgraph is larger than the
log-density of the host graph.

Finally, we extend this algorithm to obtain an $O(n^{1/4 -
  \eps})$-approximation algorithm which runs in time
$O(2^{n^{O(\eps)}})$ and also explore various approaches
to obtain better approximation algorithms in restricted parameter
settings for random instances.
\end{abstract}

\end{titlepage}

\section{Introduction}

In this paper, we study the Densest $k$-Subgraph (DkS) problem: Given
a graph $G$ and parameter $k$, find a subgraph of $G$ on $k$ vertices
with maximum density (average degree). This problem may be seen as an
optimization version of the classical NP-complete decision problem
CLIQUE. The approximability of DkS is an important open problem and
despite much work, there remains a significant gap between the
currently best known upper and lower bounds.

In addition to being NP-hard (as seen by the connection to CLIQUE),
the DkS problem has also been shown not to admit a PTAS under various
complexity theoretic assumptions. Feige~\cite{F02} has shown this
assuming random 3-SAT formulas are hard to refute, while more recently
this was shown by Khot \cite{Khot} assuming that NP does not have
randomized algorithms that run in sub-exponential time (i.e.\ that $NP
\not \subseteq \cap_{\eps > 0} BPTIME(2^{n^{\eps}})$).

The current best approximation ratio of $n^{1/3-\eps}$ for some small
$\eps > 0$ (which has been estimated to be roughly $1/60$) 
was achieved by Feige, Kortsarz and Peleg~\cite{FKP}. Other known
approximation algorithms have approximation guarantees that depend
on the parameter $k$. The greedy heuristic of Asahiro et
al.~\cite{Asahiro} obtains an $O(n/k)$ approximation. Linear and
semidefinite programming (SDP) relaxations were studied by
Srivastav and Wolf~\cite{SW} and by Feige and
Langberg~\cite{FL01}, where the latter authors show how they can
be used to get approximation ratios somewhat better than $n/k$.
Feige and Seltser~\cite{FS} show graphs for which the
integrality gap of the natural SDP relaxation is
$\Omega(n^{1/3})$, indicating that in the worst case, the
approximation ratios achieved in~\cite{FKP} are better than those
achievable by this SDP. When the input is a complete graph with
edge weights that satisfy the triangle inequality, a simple greedy
algorithm achieves an approximation ratio of~2 (though the
analysis of this algorithm is apparently not easy, see~\cite{BG}).

A related problem to \dks\ is the max density subgraph problem,
where the aim is to find a subgraph $H$ which maximizes the ratio
of number of edges to number of vertices in $H$. It turns out that
this can be solved in polynomial time~\cite{GGT}. Charikar et
al.~\cite{CHK} recently showed an $O(n^{1/3})$ approximation to
the maximization version of label cover. This problem is at least
as difficult as \dks\ in the sense that there is an approximation
preserving randomized reduction from \dks\ (see~\cite{CHK} for example) to it. No reduction in
the opposite direction
is known.

Our algorithm for \dks\ is inspired by studying an average-case
version we call the `Dense vs Random' question (see Section
\ref{sec:random} for a precise definition).  Here the aim is to
distinguish random graphs from graphs containing a dense
subgraphs, which can be viewed as the task of efficiently certifying
that random graphs do not contain dense subgraphs. This distinguishing
problem is similar in flavour to the well-studied planted clique
problem 
(see \cite{AKS}). Getting a better understanding of this planted
question seems crucial for further progress on \dks.

Some recent papers have used the hypothesis that (bipartite versions of) the planted dense subgraph problem is computationally
hard:
Applebaum et al.~\cite{ABW} use this in the design of a new public key encryption scheme.
More recently,  Arora et al.~\cite{ABBG} use this to demonstrate that evaluating certain financial derivatives is computationally hard.
The use of such hardness assumptions provides additional motivation for the study of algorithms for these problems.

\subsection{Our results} \label{subsec:results}

Our main result is a polynomial time $O(n^{1/4 + \eps})$ approximation algorithm for \dks, for any constant $\eps > 0$. That is, given $\eps >0$, and a graph $G$ with a $k$-subgraph of density $d$, our algorithm outputs a $k$-subgraph of density $\Omega \left(d/{n^{1/4+\eps}} \right)$ in polynomial time. In particular, our techniques give an $O(n^{1/4})$-approximation algorithm running in $O(n^{\log n})$ time.

At a high level, our algorithms involve cleverly counting appropriately defined subgraphs of constant size in $G$, and use these counts to identify the vertices of the dense subgraph. A key notion which comes up in the analysis is the following:
\begin{definition}
\label{def:logdensity} The {\em log-density} of a graph $G(V,E)$
with average degree $D$ is $\log_{|V|} D$. In other words, if a
graph has log-density $\alpha$, its average degree is $|V|^{\alpha}$.
\footnote{We will ignore low order terms when
expressing the log-density. For example, graphs with constant
average degree will be said to have log-density~0, and cliques
will be said to have log-density~1.}
\end{definition}

We first consider the random setting -- distinguishing between $G$ drawn from $G(n,p)$, and $G$ containing a $k$-subgraph $H$ of certain density planted in it. In fact, we examine a few variants (i.e.\ in the second case each of $G$ and $H$ may or may not be random). For all these variants we show that if the log-density of $G$ is $\alpha$ and that of $H$ is $\beta$, with $\beta > \alpha$, we can solve the distinguishing problem in time $n^{O(1/(\beta - \alpha))}$.

Our main technical contribution is that a result of this nature can be proven for arbitrary graphs. Informally, our main result, which gives a {\em family} of algorithms, parametrized by a rational number $r/s$, can be stated as follows (see Theorem~\ref{thm:main} for a more precise statement):
\begin{theorem}\label{thm:mainintroduction} {\bf (informal)}
Let $s>r>0$ be relatively prime integers, let $G$ be an undirected graph with maximum degree $D=n^{r/s}$, which
contains a $k$-subgraph $H$ with average degree $d$.
Then there is an algorithm running in time $n^{O(r)}$ that finds
a $k$-subgraph of average degree $\Omega(d/D^{(s-r)/s})$.
\end{theorem}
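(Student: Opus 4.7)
The plan is to lift the dense-vs-random strategy of Section~\ref{sec:random} to arbitrary host graphs by counting copies of a carefully chosen small rooted tree $T$ of size $O(r)$. The conceptual step is to use the maximum-degree bound $D = n^{r/s}$ as a worst-case proxy for the log-density of $G$: every vertex of $G$ has at most $D$ neighbours, so the number of labelled copies of any rooted tree $T$ at any fixed root is at most $D^{|T|-1}$, exactly matching the random $G(n,p)$ calculation with $p = D/n$. The numbers $r$ and $s$ control respectively the size of $T$ (hence the runtime) and the density threshold we can distinguish.

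First I would guess the target average degree $d$ of $H$ up to a constant factor (polynomially many guesses suffice), and the number of vertices $k$. Next, design the witness tree $T$ with $O(r)$ edges, with shape tuned to the ratio $r/s$, and enumerate the number $c(v)$ of labelled copies of $T$ rooted at $v$, for every $v \in V(G)$; this takes time $n^{O(r)}$. A convexity argument applied inside $H$ shows that, because $H$ has average degree $d$, a constant fraction of its $\Omega(kd)$ edges have both endpoints with $c$-value at least some threshold $\tau \sim d^{|T|-1}$. Simultaneously, the degree bound gives $\sum_v c(v) \leq n D^{|T|-1}$, so the total number of vertices in $V(G)$ with $c$-value above $\tau$ is bounded by roughly $n (D/d)^{|T|-1}$.

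The extraction step then combines these counts with a subgraph-recovery procedure: restrict $G$ to the vertices with $c$-value above $\tau$, and apply a standard \dks\ subroutine (greedy, or LP rounding on the restricted induced subgraph) to output $k$ vertices. The density guarantee $\Omega(d/D^{(s-r)/s})$ would follow by balancing the ``good'' edge mass $\Omega(kd)$ contributed by $H$ against the adversarial mass, which is forced to pay a factor of $D$ per level of $T$; the coprime choice of $r$ and $s$ appears because the trade-off is optimized over the admissible shapes of $T$.

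The principal obstacle, which is really the entire content of the theorem, is making the two counting bounds align so that the advertised exponent $(s-r)/s$, rather than something weaker like $1 - 1/r$, comes out of the analysis. In the random-planted case this is tractable because $c(v)$ concentrates sharply around its expectation, whereas here one needs a purely worst-case combinatorial analysis. I expect the tight bound to require an iterated application of the max-degree inequality along each level of $T$, combined with a weighted AM--GM or convexity estimate relating the level sizes of $T$ to the exponent $r/s$. Choosing the correct shape of $T$ given $r,s$, and then verifying that the recovered $k$-subgraph truly achieves the claimed density (and not merely contains many high-$c$ vertices), is where the real work will lie.
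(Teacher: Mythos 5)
There is a genuine gap, and it sits exactly where you place the ``real work'': the extraction step. Your plan is to compute, for each single vertex $v$, the number $c(v)$ of copies of a witness tree rooted at $v$, threshold on $c(v)$, restrict $G$ to the surviving vertices, and then ``apply a standard \dks\ subroutine.'' By your own estimate the surviving set has size roughly $n(D/d)^{|T|-1}$, which is in general far larger than $k$, and no known subroutine (greedy gives only $\max\{1,k D/n'\}$; LP rounding on the restricted graph is the very problem you are trying to solve) extracts a $k$-subgraph of density $d/D^{(s-r)/s}$ from such a set. Knowing that $H$ lies inside a still-large vertex set does not localize $H$, so this step is essentially circular. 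A per-vertex count also discards the structure that makes the argument work: the witness in the paper is an $(r,s)$-caterpillar whose \emph{entire set of $r+1$ leaves} is fixed (by exhaustive search in time $n^{r+O(1)}$, or equivalently by conditioning in an LP hierarchy of depth $r+1$ --- this, not the size of $T$, is the source of the $n^{O(r)}$ runtime), and the quantity tracked is not a count at a root but the candidate set $S_t$ of internal vertices reachable after $t$ edges of the caterpillar, together with the fraction of ``$H$-mass'' it carries, measured as $\mathrm{LP}(S_t)/|S_t|$.

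The dense subgraph is then found at an \emph{intermediate} step, not by filtering at the end: at each step a local procedure (DkS-Local) examines the bipartite graph between $S_{t-1}$ and $\Gamma(S_{t-1})$ and either outputs a $\rho$-dense subgraph directly, or certifies (Lemmas~\ref{lem:expand-guarantee} and~\ref{lem:contract-guarantee}) that the inductive lower bounds on $\mathrm{LP}(S_t)$ and $\mathrm{LP}(S_t)/|S_t|$ can be pushed one more step. A backbone step multiplies $|S_t|$ by at most $D$ while the LP mass grows by $d/\rho$; a hair step intersects with the neighborhood of a well-chosen leaf and boosts the density ratio. After all $s$ steps the bounds force $\mathrm{LP}(S_s)>|S_s|$, contradicting the LP constraints, so some intermediate extraction must have succeeded; the exponent $(s-r)/s$ falls out of balancing the $s-r$ backbone steps (each costing a factor $D$) against the $r+1$ hair steps. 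Your intuition about the caterpillar shape, the role of $D=n^{r/s}$ as a worst-case substitute for log-density, and the min-degree reduction for $H$ all match the paper, but without the leaf-fixing, the step-by-step candidate-set bookkeeping, and the intermediate bipartite extraction, the argument does not close.
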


Note that the log-density of $H$ does not explicitly occur in the statement of the theorem. However, it turns out we can pre-process the graph, and restrict ourselves to the case $kD = n$ (see Appendix~\ref{sec:boundproduct}), in which case $D^{(s-r)/s} = k^{r/s}$, thus the output subgraph has average degree $d/k^{r/s}$. So if the log-density of $H$ is $\beta$ (recall that $G$ has log-density $\leq r/s$), the output graph has density $d/k^{r/s} = k^{\beta - r/s}$. Thus the difference in the log-densities also plays a role in the case of arbitrary graphs.

Also note that the theorem deals with the \emph{maximum} degree in $G$, and not average degree (which defines the log-density). It turns out that this upper-bound on the log-density will suffice (and will be more useful in the analysis).

As we observed earlier, we give a family of algorithms
parameterized by a rational number. Thus, given $G$ and $k$, we
pick $r/s$ appropriately and appeal to the theorem. In some sense,
this family of algorithms is a systematic generalization of the
(somewhat ad hoc) algorithms of \cite{FKP}.

Finally, observe that theorem implies an approximation ratio of at most
$D^{(s-r)/s} \leq n^{r(s-r)/s^2} \le n^{1/4}$ for every choice of $s>r>0$. As we mentioned, the statement above is informal. If we choose to restrict the running time to $O(n^{s_0})$ by limiting ourselves to $r<s\leq s_0$ (i.e.\ the bound on $D$ will not be exact), we lose a factor $n^{1/s_0}$ in the approximation. We refer to Section~\ref{sec:main} for the details.

\paragraph*{Outline of techniques.} The distinguishing algorithm for the Dense vs Random problem is based on the fact that in $G(n,p)$, instances of any fixed constant size structure appear more often when the graph has a higher log-density. More precisely, given parameters $r,s$, we will define a (constant size) tree $T_{r,s}$ such that a fixed set of leaves can be completed to many instances of $T_{r,s}$ in a graph with log-density $>r/s$, whereas in a random graph with log-density $<r/s$ there will only be a negligible number of such instances. Thus, if the log-density of $H$ is greater than $r/s$, finding a small set of vertices in $H$ (and using them as leaves of $T_{r,s}$) can help reveal larger portions of a dense subgraph. Though our intuition comes from random graphs, the heart of the argument carries over to worst-case instances.

We use a linear programming relaxation to guide us in our search for
the fixed vertices' assignment and obtain the dense subgraph. In order
to extract consistent feasible solutions from the LP even under the
assumptions that fixed vertices belong to $H$, the LP relaxation will
have a recursive structure similar to the Lov\'asz-Schrijver hierarchy
\cite{LS}. Feasible solutions to this LP (when they exist) can be
found in time $n^{O(r)}$ (where the depth of the recursion will be
roughly $r$), while the rest of the algorithm (given the LP solution)
will take linear time. As we shall see, there is also a combinatorial
variant of our algorithm, which, rather than relying on an LP
solution, finds the appropriate set of leaves by exhaustive search (in
time $n^{r+O(1)}$). While the analysis is essentially the same as for
the LP variant, it is our hope that a mathematical programming
approach will lead to further improvements in running time and
approximation guarantee.

The approximation ratio we achieve for general instances of DkS
matches the ``distinguishing ratio" we are currently able to achieve for various
random settings. This suggests the following concrete open
problem which seems to be a barrier for obtaining an approximation
ratio of $n^{1/4-\eps}$ for \dks\ -- distinguish between the
following two distributions:
\begin{quote}
$\mathcal{D}_1$: graph G picked from $G(n,n^{-1/2})$, and\\
$\mathcal{D}_2$: graph G picked from $G(n,n^{-1/2})$ with the induced
  subgraph on $\sqrt{n}$ vertices replaced with $G(\sqrt{n},n^{-(1/4+\eps)})$.
\end{quote}
In section~\ref{sec:subexponential} we will see that this
distinguishing problem can be solved in time $2^{n^{O(\eps)}}$, and
this can be used to give an algorithm for DkS with approximation ratio
$n^{1/4 - \eps}$, and run time $2^{n^{O(\eps)}}$. These mildly
exponential algorithms are interesting given the recent results
of \cite{ABBG} and \cite{ABW}, which are based on the assumption that
planted versions of DkS are hard.

In section~\ref{sec:sublogdensity}, we show that in the random setting we can beat the
log-density based algorithms for certain ranges of parameters.  We use different
techniques for different random models, some of which are very different from those used in sections~\ref{sec:random} and \ref{sec:main}. Interestingly, none of these techniques give a distinguishing ratio better than $n^{1/4}$ when $k=D=\sqrt{n}$.
%
%

\subsection{Organization of paper}

In Section~\ref{sec:notation.and.simplification}, we introduce
some notation, and describe simplifying assumptions which will be
made in later sections (some of these were used in \cite{FKP}). In
Section~\ref{sec:random}, we consider two natural `planted'
versions of DkS, and present algorithms for these versions. The
analysis there motivates our approximation algorithm for DkS,
which will be presented in Section~\ref{sec:main}. In
Section~\ref{sec:subexponential} and Section~\ref{sec:sublogdensity},
we explore approaches to overcome the log-density barrier that
limits our algorithms in Section~\ref{sec:main}. In
Section~\ref{sec:subexponential} we give an $O(n^{1/4 -\eps})$
approximation algorithm for arbitrary graphs with run time
$O(2^{n^{O(\eps)}})$ time, and in Section~\ref{sec:sublogdensity},
we show that in various random 
settings, we can obtain a $\sqrt{D}$-approximation (which is better
than the log-density guarantee for $1<D<\sqrt{n}$).

\section{Notation and some simplifications}
\label{sec:notation.and.simplification}
We now introduce some notation which will be used in the rest of the paper. Unless otherwise stated,
$G(V,E)$ refers to an input graph on $n$ vertices, and $k$ refers to the size of the subgraph we are required to output. Also, 
$H$ will denote the densest $k$-subgraph (breaking ties arbitrarily) in $G$, and $d$ denotes the average degree of $H$. 
For $v \in V$, $\Gamma(v)$ denotes the set of neighbors of $v$, and 
for a set of vertices $S\subseteq V$, $\Gamma(S)$ denotes the set of all neighbors of vertices in $S$. Finally, for any number $x\in\Real$, will use the notation $\frc(x)=x-\rounddown{x}$.

We will make the following simplifying assumptions in the remaining sections: (these are justified in Section~\ref{sec:simplifications} of the appendix)
\begin{enumerate}
\item There exists a $D$ such that (a) the maximum degree of $G$ is at most $D$, and (b) a greedy algorithm finds a $k$-subgraph of density $\max\{1,kD/n\}$ in $G$. %
\item $d$ is the minimum degree in $H$ (rather than the average degree)
\item It suffices to find a subgraph of size {\em at most} $k$, rather than exactly $k$. In Section~\ref{sec:main} we use `$k$-subgraph' more loosely to mean a subgraph on at most $k$ vertices.
\item When convenient, we may also take $G$ (and hence $H$) to be bipartite.
\item The edges of the graph $G$ are assumed to be unweighted, since
  we can bucket the edges into $O(\log n)$ levels according to the
  edge weights (which we assume are all positive), and output the
  densest of the $k$-subgraphs obtained by applying the algorithm to
  each of the graphs induced by the edges in a bucket. This incurs a
  loss of just $O(\log n)$ factor in the approximation.
\end{enumerate}
In many places, we will ignore leading constant factors (for example, we may find a subgraph of size $2k$ instead of $k$). It will be clear that these do not seriously affect the approximation factor. 
%


\section{Random graph models}
\label{sec:random}

An $f(n)$-approximation algorithm for the densest $k$-subgraph problem must be able to distinguish between graphs where any $k$-subgraph has density at most $c$, and graphs with an $cf(n)$-dense $k$-subgraph planted in them. Random graphs are a natural class of graphs that do not contain dense $k$-subgraphs. Further, random graphs seem to present challenges for currently best known algorithms for \dks. Hence, it is instructive to see what parameters allow us to efficiently solve this distinguishing problem.

We consider three variants of the random distinguishing problem, in increasing order of difficulty. In the \emph{Random Planted Model}, we would like to distinguish between two distributions:

\begin{quote}
$\mathcal{D}_1$: Graph $G$ is picked from $G(n,p)$, with $p=n^{\alpha-1}$, $0<\alpha<1$.\\
$\mathcal{D}_2$: $G$ is picked from $G(n, n^{\alpha-1})$ as before. A set $S$ of $k$ vertices is chosen arbitrarily, and the subgraph on $S$ is replaced with 
a random graph $H$ from $G(k, k^{\beta-1})$ on $S$.
\end{quote}

A slightly harder variant is the \emph{Dense in Random} problem, in which we would like to distinguish between $G$ chosen from $\mathcal{D}_1$, as before, and $G$ which is chosen similarly to $\mathcal{D}_2$, except that the planted subgraph $H$ is now an \emph{arbitrary} graph with average degree $k^\beta$ (that is, log-density $\beta$). Here, the algorithm must be able to detect the second case with high probability regardless of the choice of $H$.

Finally, we consider the \emph{Dense versus Random} problem, in which we would like to distinguish between $G \sim \mathcal{D}_1$, and an \emph{arbitrary} graph $G$ which contains a $k$-subgraph $H$ of log-density $\beta$.

Observe that for $G \sim \mathcal{D}_1$, a $k$-subgraph would have expected average degree $kp = kn^{\alpha-1}$. Further, it can be shown that densest $k$-subgraph in $G$ will have average degree $\max \{kn^{\alpha-1}, 1\}$, w.h.p. (up to a logarithmic factor). Thus if we can solve the distinguishing problem above, its `distinguishing ratio' would be $\min_{\beta}(k^{\beta} / \max \{kn^{\alpha-1}, 1\})$, where $\beta$ ranges over all values for which we can distinguish (for the corresponding values of $k,\alpha$). If this is the case for all $\beta>\alpha$, then (as follows from a straightforward calculation), the distinguishing ratio is never more than
\begin{align*}
\frac{k^{\alpha}}{\max \{kn^{\alpha-1}, 1\}} &=\min\left\{\left(\frac{n}k\right)^{1-\alpha},k^{\alpha}\right\}\\
&=n^{\alpha(1-\alpha)}\cdot\min\left\{\left(\frac{n^{1-\alpha}}k\right)^{1-\alpha},\left(\frac k{n^{1-\alpha}}\right)^{\alpha}\right\}\\
&\leq n^{\alpha(1-\alpha)}\\&\leq n^{1/4}.
\end{align*}

In this section we will only discuss the Random Planted Model and the Dense versus Random problem, while the intermediate Dense in Random problem is only examined in Section~\ref{sec:sublogdensity}.

\subsection{The random planted model}

One easy way of distinguishing between the two distributions in the Random Planted Model involves looking at the highest degree vertices, or at the pairs of vertices with the largest intersection of neighborhoods. This approach, which is discussed in Section~\ref{sec:sublogdensity} is not only a distinguishing algorithm, but can also identify $H$ in the case of $G\sim\mathcal{D}_2$. However, it is not robust, in the sense that we can easily avoid a detectable contribution to the degrees of vertices of $H$ by resampling the edges between $H$ and $G\setminus H$ with the appropriate probability.

Rather, we examine a different approach, which is to look for constant size subgraphs $H'$ which act as `witnesses'. If $G \sim \mathcal{D}_1$, we want that w.h.p.\ $G$ will not have a subgraph isomorphic to $H'$, while if $G \sim \mathcal{D}_2$, w.h.p.\ $G$ should have such a subgraph. It turns out that whenever $\beta > \alpha$, such an $H'$ can be exists, and thus we can solve the distinguishing problem.

Standard probabilistic analysis (cf.~\cite{AS}) shows that if a graph has log-density
greater than $r/s$ (for fixed integers $0 < r < s$) then it is
expected to have constant size subgraphs in which the ratio of
edges to vertices is $s/(s-r)$, and if  the log-density is smaller
than $r/s$, such subgraphs are not likely to exist (i.e., the
occurrence of such subgraphs has a threshold behavior).
Hence such subgraphs can serve as witnesses when $\alpha<r/s<\beta$.

Observe that in the approach outlined above, $r/s$ is rational,
and the size of the witnesses increases as $r$ and $s$ increase.
This serves as intuition as to why the statement of
Theorem~\ref{thm:mainintroduction} involves a rational number
$r/s$, with the running time depending on the value of $r$.

\subsection{Dense versus Random} \label{sec:dvsr}
The random planted model above, though interesting, does not seem to say much about the general \dks\ problem. In particular, for the Dense versus Random problem,
simply looking for the occurrence of subgraphs need not work, because the planted graph could be very dense and yet not have the subgraph we are looking for.

To overcome this problem, we will use a different kind of witness,
which will involve special constant-size trees, which we call {\em
  templates}. In a {\em template witness} based on a tree $T$, we fix
a small set of vertices $U$ in $G$, and count the number of trees
isomorphic to $T$ whose set of leaves is exactly $U$. The
templates are chosen such that a random graph with log-density
below a threshold will have a count at most poly-logarithmic for
{\em every} choice of $U$, while we will show by a counting
argument that in any graph (or subgraph) with log-density above
the same threshold, there exists a set of vertices $U$ which
coincide with the leaves of at least $n^{\eps}$ copies of $T$ (for
some constant $\eps>0$). As noted in
Section~\ref{sec:notation.and.simplification}, we may assume
\emph{minimum} degree $k^{\beta}$ in $H$ as opposed to average
degree (this will greatly simplify the counting argument).

As an example, suppose the log-density is $2/3$. In this case, the
template $T$ we consider is the tree $K_{1,3}$ (a claw with three
leaves). For any triple of vertices $U$, we count the number of copies
of $T$ with $U$ as the set of leaves -- in this case this is precisely
the number of common neighbors of the vertices in $U$. In this case,
we show that if $G \sim \mathcal{D}_1$, with $\alpha \leq 2/3$, {\em
  every} triple of vertices has at most $O(\log n)$ common
neighbors. While in the dense case, with $\beta = 2/3 +
\eps$, there exists some triple with at least $k^{\eps}$ common
neighbors. Since for ranges of parameters of interest $k^{\eps} =
\omega(\log n)$, we have a distinguishing algorithm.

Let us now consider a log-density threshold of $r/s$ (for some
relatively prime integers $s>r>0$). The tree $T$ we will associate
with the corresponding template witness will be a caterpillar -- a
single path called the {\em backbone} from which other paths, called
{\em hairs}, emerge. In our case, the hairs will all be of length
$1$. More formally,

\begin{definition}\label{rs-caterpillar} An $(r,s)${\em-caterpillar} is a tree constructed inductively as follows: Begin with a single vertex as the leftmost node in the backbone. For $s$ steps, do the following: at step $i$, if the interval $[(i-1)r/s, ir/s]$ contains an integer, add a hair of length $1$ to the rightmost vertex in the backbone; otherwise, add an edge to the backbone (increasing its length by $1$).
\end{definition}

This inductive definition is also useful in deriving an upper bound on
the number of $(r,s)$-caterpillars in $G(n,p)$ (for $p\leq n^{r/s-1}$)
with a fixed sequence of `leaves' (end-points of the hairs)
$v_0,v_1,\ldots,v_r$. We do this by bounding the number of candidates
for each internal (backbone) vertex, and showing that with high
probability, this is at most $O(\log n)$. We begin by bounding the
number of candidates for the rightmost backbone vertex in a {\em
  prefix} of the $(r,s)$ caterpillar (as per the above inductive
construction).  For each $t=1,\ldots,r$, let us write
$S_{v_0,\ldots,v_{\lfloor tr/s\rfloor}}(t)$ for the set of such
candidates at step $t$ (given the appropriate prefix of
leaves).
 The following claim
upper bounds the cardinality of these sets (with high
probability). (Recall the notation $\frc(x)=x-\rounddown{x}$.)

\begin{claim}\label{G(n,p)-bound} In $G(n,p)$, for $p\leq n^{r/s-1}$, for every $t=1,\ldots,s$ and for any fixed sequence of vertices $U_i=v_0,\ldots,v_{\lfloor tr/s\rfloor}$, for every vertex $v\in V\setminus U_i$ we have $$\prob[v\in S_{v_0,\ldots,v_{\lfloor tr/s\rfloor}}(t)]\leq n^{\frc(tr/s)-1}(1+o(1)).$$
\end{claim}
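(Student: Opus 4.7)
The plan is to prove the claim by a direct first-moment argument rather than a per-step induction. Fix $v \in V \setminus U_t$, and let $X_v(t)$ denote the number of embeddings of the prefix $(r,s)$-caterpillar after $t$ construction steps into $G$ whose leaves are mapped in order to $v_0, v_1, \ldots, v_{\lfloor tr/s\rfloor}$ and whose rightmost backbone vertex is mapped to $v$. Then $v \in S_{v_0,\ldots,v_{\lfloor tr/s\rfloor}}(t)$ iff $X_v(t) \ge 1$, so by Markov's inequality $\Pr[v \in S(t)] \le \mathbb{E}[X_v(t)]$ and the work reduces to a first-moment bound.

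Next I would extract the combinatorics of the prefix caterpillar. Write $h(t) := \lfloor tr/s \rfloor$ for the number of hairs added by step $t$. The prefix caterpillar has $1 + t$ vertices (one initial, one per construction step) and $t$ edges (one per step). Among these vertices, $h(t) + 1$ are the specified leaves, one is the rightmost backbone vertex (mapped to $v$), and the remaining $t - h(t) - 1$ are intermediate backbone vertices whose identities in $G$ are unspecified. Importantly, this decomposition holds uniformly whether step $t$ itself is a hair step (in which case $v$ was already the rightmost backbone at step $t-1$ and a new hair edge is attached to it) or a backbone step (in which case $v$ is newly added, connected to the previous rightmost backbone).

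Then I would compute the expectation. The number of ordered assignments of distinct vertices from $V \setminus (U_t \cup \{v\})$ to the intermediate backbone positions is at most $n^{t - h(t) - 1}$. For each such assignment, the embedding is valid iff all $t$ prescribed edges appear in $G(n,p)$, an event of probability $p^t \le n^{t(r/s - 1)}$ by edge independence. Therefore
\[
\mathbb{E}[X_v(t)] \;\le\; n^{t - h(t) - 1} \cdot n^{t(r/s - 1)} \;=\; n^{tr/s - h(t) - 1} \;=\; n^{\mathrm{fr}(tr/s) - 1},
\]
using $\mathrm{fr}(tr/s) = tr/s - h(t)$ by definition. The claimed bound, which additionally allows a $(1+o(1))$ slack, follows at once.

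I do not anticipate any serious obstacle. The only subtlety is that the number of ordered intermediate vertex assignments is in fact at most $(n - h(t) - 2)(n - h(t) - 3) \cdots (n - t - 1) \le n^{t-h(t)-1}$, so the upper bound above is genuinely clean and the $(1+o(1))$ slack is automatically satisfied. The key structural identity powering the calculation is that each construction step contributes exactly one edge and at most one free (non-leaf) vertex, which cleanly pairs the power of $n$ counting free vertex choices against the power of $p$ counting required edges; this explains, in a single equation, why the threshold log-density for witness appearance is $r/s$.
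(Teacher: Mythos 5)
Your route --- a single first-moment computation over witness structures --- is genuinely different from the paper's proof: the paper tracks $\prob[v\in S(t)]$ through the inductive construction step by step, multiplying by $p$ at hair steps and invoking concentration of $|S(t-1)|$ (via Chernoff bounds) at backbone steps, whereas you avoid concentration entirely by bounding the expected number of witnesses terminating at $v$. Your bookkeeping for the non-degenerate case is correct: $t$ edges, $t-\lfloor tr/s\rfloor-1$ free backbone vertices, hence $n^{t-h(t)-1}p^{t}\le n^{\frc(tr/s)-1}$.

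There is, however, a gap in the reduction ``$v\in S(t)$ iff $X_v(t)\ge 1$'' when $X_v(t)$ counts embeddings with \emph{distinct} intermediate vertices. The sets $S(t)$ are generated by the operations $S(t)=\Gamma(S(t-1))$ and $S(t)=S(t-1)\cap\Gamma(v_j)$, which enforce no distinctness: unrolling them, $v\in S(t)$ is witnessed by a homomorphic image of the prefix caterpillar in which backbone vertices may repeat or coincide with leaves. So $v\in S(t)$ does not imply the existence of an injective embedding, and Markov applied to your $X_v(t)$ does not bound $\prob[v\in S(t)]$. The fix is standard but must be stated: let $X_v(t)$ count homomorphisms, and group them by the number $f\le t-h(t)-1$ of distinct non-prescribed image vertices. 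The image is connected on $h(t)+2+f$ distinct vertices, hence contains at least $h(t)+1+f$ distinct edges, so each class contributes expectation $O\bigl(n^{f}p^{h(t)+1+f}\bigr)\le O\bigl(n^{f\cdot r/s+(r/s-1)(h(t)+1)}\bigr)$. This is maximized at the injective case $f=t-h(t)-1$ (where it equals $n^{\frc(tr/s)-1}$) and drops by a factor $n^{r/s}$ with each identification, so the degenerate terms are absorbed into the $(1+o(1))$. With that addendum your argument is complete, and arguably cleaner than the paper's sketch, which must separately flag the same repeated-vertex issue (``bound the number of vertices participating in $S(t)\cap S(t')$'') and additionally needs $|S(t-1)|$ to concentrate, which is why the paper requires $r,s$ coprime so that the expected set sizes are $n^{\Omega(1)}$.
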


Intuitively, the claim follows from two simple observations: (a) For
any set of vertices $S\subseteq V$ in $G(n,p)$, w.h.p.\ the
neighborhood of $S$ has cardinality at most $pn|S|$ (since the degree
of every vertex is tightly concentrated around $pn$), and (b) for
every vertex set $S$, the expected cardinality of its intersection
with the neighborhood of any vertex $v$ is at most
$\expec[|S\cap\Gamma(v)|]\leq p|S|$. Applying these bounds inductively
to the construction of the sets $S(t)$ when $p=n^{r/s-1}$ then implies
$|S(t)|\leq n^{\frc(tr/s)}$ for every $t$.

\begin{proof}[Proof (sketch)] In fact, it suffices to show equality for $p=n^{r/s-1}$ (since for sparser random graphs the probability can only be smaller). More precisely, for this value of $p$, we show:
$$\prob[v\in S_{v_0,\ldots,v_{\lfloor tr/s\rfloor}}(t)]=n^{\frc(tr/s)-1}(1\pm o(1)).$$

We prove the claim by induction. For $i=1$, it follows by definition of $G(n,p)$: $\prob[v\in S_{v_0}(1)]=p=n^{r/s-1}$. For $t>1$, assume the claim holds for $t-1$. If the interval $[(t-1)r/s, tr/s]$ contains an integer (for $1<t\leq s$ it must be $\roundup{(t-1)r/s}$), then $S(t)=S(t-1)\cap\Gamma(v_{\roundup{(t-1)r/s}})$. Thus, by definition of $G(n,p)$ and the inductive hypothesis,
\begin{align*}\prob[v\in S_{v_0,\ldots,v_{\lfloor tr/s\rfloor}}(t)] &=p \cdot \prob[v\in S_{v_0,\ldots,v_{\lfloor (t-1)r/s\rfloor}}(t-1)] \\
&=
  n^{r/s-1}n^{\frc((t-1)r/s)-1}(1+o(1))\\ &=n^{\frc(tr/s)-1}(1\pm o(1)).
\end{align*}
Otherwise, if the interval $[(t-1)r/s, tr/s]$ does not contain an
integer, then $S(t)=\Gamma(S(t-1))$. In this case, by the inductive
hypothesis, the cardinality of the set $|S(t-1)|$ is tightly
concentrated around $n^{\frc((t-1)r/s)}$ (using Chernoff-Hoeffding
bounds). If we condition on the choice of all $S(t')$ for $t'<t$, and $S(t-1)$ has (approximately) the above cardinality, then for every $v$ not appearing in the previous sets, we have
\begin{align*}
\prob[v\in S_{v_0,\ldots,v_{\lfloor tr/s\rfloor}}(t)] &= \prob[\exists u\in S_{v_0,\ldots,v_{\lfloor tr/s\rfloor}}(t-1): (u,v)\in E]
\\
&=1-(1-p)^{|S(t-1)|}\\
&=p|S(t-1)|(1-o(1))&\text{since }p|S(t-1)|=o(1)\\
&=n^{r/s-1}n^{\frc((t-1)r/s)}(1\pm o(1))\\
&=n^{\frc(tr/s)-1}(1\pm o(1)).
\end{align*}

Note that a more careful analysis need also bound the number of vertices participating in $S(t)\cap S(t')$ for all $t'<t$. 
Further, even in this case, 
 tight concentration assumed above is only achieved when the expected size of the set is $n^{\Omega(1)}$. However, this is guaranteed by the inductive hypothesis, assuming $r$ and $s$ are relatively prime.
\end{proof}

Now by symmetry, the same bounds can be given when constructing the candidate sets in the opposite direction, from right to left (note the symmetry of the structure). Thus, 
once all the leaves are fixed, every candidate for an internal vertex can be described, for some $t\in[1,s-1]$, as the rightmost backbone vertex in the $t$th prefix, as well as the leftmost backbone vertex in the $(s-t)$th prefix starting from the right. 
By Claim~\ref{G(n,p)-bound}, the probability
of this event is at most $$n^{\frc(tr/s)-1}n^{\frc((s-t)r/s)-1}(1+o(1))=n^{-1}(1+o(1)).$$
Thus, since the $(r,s)$-caterpillar has $s-r$ internal vertices and $r+1$ leaves, it follows by standard probabilistic arguments that, for some universal constant $C>0$, the probability that
total number of caterpillars for any  sequence of leaves exceeds $(\log n)^{s-r}$ is at most $(s-r)n^{r+1}n^{-C\log\log n}$, which is $o(1)$ for any constants $r,s$.

Now let us consider the number of (r,s)-caterpillars with a fixed set of leaves in a $k$-subgraph $H$ with minimum degree at least $d=k^{(r+\eps)/s}$. Ignoring low-order terms (which would account for repeated leaves), the number of (r,s)-caterpillars in $H$ (double counting each caterpillar to account for constructing it inductively once from each direction) is at least $kd^s$ (since it is a tree with $s$ edges), whereas the number of possible sequences of leaves is at most $k^{r+1}$. Thus, the  number of $(r,s)$ caterpillars in $H$ corresponding to the average sequence of $r+1$ $H$-vertices is at least $kd^s/k^{r+1}= k^{r+\eps}/k^r=k^\eps$. Note that the parameters for the high probability success of the dense-versus-random distinguishing algorithm are the same as for the random planted model, giving, as before, an distinguishing ratio of $\tilde{O}(n^{1/4})$ in the worst case.

\section{An LP based algorithm for arbitrary graphs}\label{sec:main}

We now give a general algorithm for DkS inspired by the distinguishing algorithm in the Dense vs Random setting. For a graph $G$ with maximum degree $D=n^{r/s}$, we will use the $(r,s)$-caterpillar template, and keep track of sets $S(t)$ as before. We then fix the leaves one by one, while maintaining suitable bounds on $S(t)$.

Let us start by describing the LP relaxation.\footnote{The entire algorithm can be executed without solving an LP, by performing an exhaustive search for the best set of leaves, with running time comparable to that of solving the LP. We will elaborate on this later.}
Taking into account the simplifications from Section~\ref{sec:notation.and.simplification}, we 
define a hierarchy of LPs which is satisfied by a graph which contains a subgraph of size at most $k$ with minimum degree at least $d$. This hierarchy is at most as strong as the Lov\'asz-Schrijver LP hierarchy based on the usual LP relaxation (and is possibly weaker). Specifically, for all integers $t\geq 1$, we define \textbf{DkS-LP}$_t(G,k,d)$ to be the set of $n$-dimensional vectors $(y_1,\ldots y_n)$ satisfying:
\begin{eqnarray}
 \sum_{i\in V}y_i\leq k, &\;& \text{and}\label{LP:k-bound}\\
 \exists \{y_{ij}\mid i,j\in V\} &\;& \text{s.t.}\nonumber \\
 \forall i\in V &\;& \sum_{j\in\Gamma(i)}y_{ij}\geq dy_i\label{LP:d-bound}\\
 \forall i,j\in V &\;& y_{ij}=y_{ji}\label{LP:symmetry}\\
 \forall i,j\in V &\;& 0\leq y_{ij}\leq y_i\leq 1\label{LP:0-1}\\
 \text{if }t>1\text{, }\forall i\in V\text{s.t. }y_i\neq 0 &\;& \{y_{i1}/y_i,\ldots,y_{in}/y_i\}\in\textbf{DkS-LP}_{t-1}(G,k,d)\label{LP:recursion}
\end{eqnarray}

Given an LP solution $\{y_i\}$, we write LP$_{\{y_i\}}(S) = \sum_{i \in S} y_i$. When the solution is clear from context, we denote the same by LP$(S)$. We call this the \emph{LP-value} of $S$. 
When the level in the hierarchy will not be important, we will simply write DkS-LP instead of DkS-LP$_t$. A standard argument shows that a feasible solution to DkS-LP$_t(G,k,d)$ (along with all the recursively defined solutions implied by constraint~\eqref{LP:recursion}) can be found in time $n^{O(t)}$. For completeness, we illustrate this in Appendix~\ref{sec:LP-solution}

Informally, we can think of the LP as giving a distribution over subsets of $V$, with $y_i$ being the probability that $i$ is in a subset. Similarly $y_{ij}$ can be thought of as the probability that both $i,j$ are `picked'. We can now think of the solution $\{y_{ij}/y_i: 1 \leq j \leq n\}$ as a distribution over subsets, conditioned on the event that $i$ is picked.

\paragraph{Algorithm outline.} The execution of the algorithm follows the construction of an $(r,s)$-caterpillar. We perform $s$ steps, throughout maintaining a subset $S(t)$ of the vertices. For each $t$, we perform either a `backbone step' or a `hair step' (which we will describe shortly). In each of these steps, we will either find a dense subgraph, or extend an inductive argument that will give a lower bound on the ratio $\text{LP}(S(t))/|S(t)|$. Finally, we show that if none of the steps finds a dense subgraph, then we reach a contradiction in the form of a violated LP constraint, namely $LP(S(s))>|S(s)|$.

\subsection{The Algorithm}

Let us now describe the algorithm in detail. The algorithm will take two kinds of steps, {\em
backbone} and {\em hair}, corresponding to the two types of caterpillar edges. While these steps differ in the updates they make, both use the same procedure to search locally for a dense subgraph starting with a current candidate-set. Let us now describe this procedure.

\begin{center}\fbox{
\begin{minipage}{14 cm}
\textbf{DkS-Local$(S,k)$}
\begin{itemize}
    \item Consider the bipartite subgraph induced on $(S,\Gamma(S))$.
    \item For all $k'=1,\ldots,k$, do the following:
    \begin{itemize}
        \item Let $T_{k'}$ be the set of $k'$ vertices in $\Gamma(S)$ with the highest degree (into $S$).
        \item Take the $\min\{k',|S|\}$ vertices in $S$ with the most neighbors in $T_{k'}$, and let $H_{k'}(S)$ be the bipartite subgraph induced on this set and $T_{k'}$.
    \end{itemize}
    \item Output the subgraph $H_{k'}(S)$ with the largest average degree.
\end{itemize}
\end{minipage}
}
\end{center}

We will analyze separately the performance of this procedure in the context of a leaf-step and that of a hair-step. 
We begin by relating the performance of this procedure to an LP solution.

\begin{claim}\label{clm:dkslocal} Given a set of vertices $S\subseteq V$, and an LP solution $\{y_i\}\in \emph{DkS-LP}(G,k,d)$, let $k'=\roundup{\text{\emph{LP}}(\Gamma(S))}$. Then \emph{DkS-Local}$(S,k)$ outputs a subgraph with average degree at least
\[ \frac1{\max\{|S|,k'\}} \cdot \sum_{j\in\Gamma(S)}y_j|\Gamma(j)\cap S|. \]
\end{claim}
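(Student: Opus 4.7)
The plan is to prove the claim via two steps: first, bound the LP quantity $\sum_{j\in\Gamma(S)} y_j|\Gamma(j)\cap S|$ from above by the integral quantity $E(S,T_{k'}):=\sum_{j\in T_{k'}}|\Gamma(j)\cap S|$; second, lower bound the average degree of the particular candidate $H_{k'}(S)$ (for this specific $k'$) in terms of $E(S,T_{k'})$.  Since \textbf{DkS-Local} returns the densest $H_{k'}$ over all $k'\in\{1,\ldots,k\}$, the bound for the single distinguished value $k'=\lceil\text{LP}(\Gamma(S))\rceil$ already implies the claim.

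For step one, I will use~\eqref{LP:0-1} (which gives $y_j\in[0,1]$ for each $j$) together with the definition of $k'$ (which yields $\sum_{j\in\Gamma(S)} y_j\le k'$) to view the LP-weighted sum as a linear functional on the polytope $P=\{y\in[0,1]^{\Gamma(S)}:\sum_j y_j\le k'\}$.  The maximum of this functional on $P$ is attained at a vertex of $P$ --- namely the $\{0,1\}$-indicator of the $k'$ vertices of $\Gamma(S)$ with the largest values $|\Gamma(j)\cap S|$ --- and this set is exactly $T_{k'}$.  Consequently $E(S,T_{k'})\ge \sum_{j\in\Gamma(S)} y_j|\Gamma(j)\cap S|$.

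For step two, let $S'\subseteq S$ denote the $\min\{k',|S|\}$ vertices chosen by the procedure.  If $k'\ge|S|$, then $S'=S$ and $H_{k'}(S)$ has at most $|S|+k'\le 2k'$ vertices with $E(S,T_{k'})$ edges, so its average degree is at least $\tfrac{2E(S,T_{k'})}{|S|+k'}\ge\tfrac{E(S,T_{k'})}{k'}=\tfrac{E(S,T_{k'})}{\max\{|S|,k'\}}$.  If $k'<|S|$, a direct averaging argument (the top-$k'$ of $|S|$ items totals at least a $k'/|S|$ fraction of the whole) gives $E(S',T_{k'})\ge\tfrac{k'}{|S|}E(S,T_{k'})$, and since $H_{k'}(S)$ has $2k'$ vertices its average degree is at least $\tfrac{E(S',T_{k'})}{k'}\ge\tfrac{E(S,T_{k'})}{|S|}=\tfrac{E(S,T_{k'})}{\max\{|S|,k'\}}$.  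The argument is essentially bookkeeping; the only mildly delicate point is that $\max\{|S|,k'\}$ (rather than $\min$) appears because one must count vertices on \emph{both} sides of the bipartite subgraph, which is exactly what forces the case split on whether $k'$ is larger or smaller than $|S|$.
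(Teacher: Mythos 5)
Your proof is correct and follows essentially the same route as the paper's: your polytope-vertex argument for $\sum_j y_j|\Gamma(j)\cap S|\le E(S,T_{k'})$ is the paper's ``move the weight from $T_{k'}$ to lower-degree vertices'' step, and the paper's single expression $\frac{\min\{k',|S|\}}{|S|}\cdot\frac1{k'}$ collapses your two cases (using $\min\{k',|S|\}\cdot\max\{k',|S|\}=k'|S|$). The only detail left implicit is that $k'=\lceil\mathrm{LP}(\Gamma(S))\rceil\le k$ (which follows from constraint~\eqref{LP:k-bound}), needed so that DkS-Local actually tries this value of $k'$.
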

\begin{proof} 
Note that by constraint~\eqref{LP:k-bound}, $k'=\roundup{\text{LP}(\Gamma(S))}\leq k$. Then in Procedure DkS-Local, the vertices in $T_{k'}$ must have at least $\sum_{j\in\Gamma(S)}y_j|\Gamma(j)\cap S|$ edges to $S$: indeed, the summation $\sum
y_j|\Gamma(j)\cap S|$ can be achieved by taking $\sum_{j\in T_{k'}}1\cdot|\Gamma(j)\cap S|$ and moving some of the weight from vertices in $T_{k'}$ to lower-degree (w.r.t.\ $S$) vertices (and perhaps throwing some of the weight away). After choosing the $\min\{k',|S|\}$ vertices in $S$ with highest degree, the remaining subgraph $H_{k'}(S)$ has average degree at least
\[ \frac{\min\{k',|S|\}}{|S|} \cdot \frac1{k'} \cdot \sum_{j\in\Gamma(S)}y_j|\Gamma(j)\cap S|. \]
This proves the claim.
\end{proof}

The backbone step in the algorithm first performs DkS-Local on the current $S$, and then
sets $S$ to be $\Gamma(S)$. The following lemma gives a way to inductively maintain a lower bound
on LP$(S(t))/|S(t)|$ assuming DkS-Local does not find a sufficiently dense subgraph.

\begin{lemma}\label{lem:expand-guarantee} Given $S\subseteq V$, and an LP solution $\{y_i\}$ for \emph{DkS-LP}$(G,k,d)$: for any $\rho\geq 1$ such that \emph{LP}$(S)/|S|\geq\rho/d$, either \emph{DkS-Local}$(S,k)$ outputs a subgraph with average degree at least $\rho$ 
or we have $$\emph{LP}(\Gamma(S))\geq\frac{d\emph{LP}(S)}\rho.$$
\end{lemma}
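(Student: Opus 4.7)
The plan is to combine Claim \ref{clm:dkslocal} with an LP-based lower bound on the quantity $\sigma := \sum_{j \in \Gamma(S)} y_j |\Gamma(j) \cap S|$, and then do a short case analysis. Setting $k' = \lceil \text{LP}(\Gamma(S)) \rceil$, Claim \ref{clm:dkslocal} already says that DkS-Local$(S,k)$ outputs a subgraph of average degree at least $\sigma/\max\{|S|, k'\}$, so everything reduces to (a) bounding $\sigma$ from below by $d \cdot \text{LP}(S)$, and (b) figuring out which of $|S|$, $k'$ wins the denominator.

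Step (a) is the heart of the argument. I would double count the edges $(i,j)$ with $i \in S$ and $j \in \Gamma(S)$, with the key move being to pass the vertex-weight $y_j$ through the summation as an edge-weight $y_{ij}$. Combining $y_{ij} = y_{ji}$ from \eqref{LP:symmetry} with the box constraint $y_{ji} \leq y_j$ from \eqref{LP:0-1} (applied with the roles of $i,j$ swapped) gives $y_j \geq y_{ij}$, so
\[
\sigma \;=\; \sum_{j \in \Gamma(S)} \sum_{i \in \Gamma(j) \cap S} y_j \;\geq\; \sum_{i \in S} \sum_{j \in \Gamma(i)} y_{ij} \;\geq\; d \sum_{i \in S} y_i \;=\; d \cdot \text{LP}(S),
\]
where the last inequality is the degree constraint \eqref{LP:d-bound} summed over $i \in S$. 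Plugging back into Claim \ref{clm:dkslocal}, DkS-Local$(S,k)$ outputs a subgraph of average degree at least $d \cdot \text{LP}(S)/\max\{|S|, k'\}$.

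For step (b), suppose DkS-Local fails to return a subgraph of average degree $\geq \rho$; then $\max\{|S|, k'\} > d \cdot \text{LP}(S)/\rho$. The hypothesis $\text{LP}(S)/|S| \geq \rho/d$ rearranges to $|S| \leq d \cdot \text{LP}(S)/\rho$, forcing the max to be attained by $k'$, i.e.\ $\lceil \text{LP}(\Gamma(S)) \rceil > d \cdot \text{LP}(S)/\rho$, which gives the desired second alternative $\text{LP}(\Gamma(S)) \geq d \cdot \text{LP}(S)/\rho$ (any potential loss from the ceiling is harmless, because DkS-Local tries every integer $k' \in \{1,\ldots,k\}$, so the analysis may equally well pick $k' = \lfloor \text{LP}(\Gamma(S)) \rfloor$). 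The main obstacle is really just step (a): the key insight is that the LP's symmetry \eqref{LP:symmetry} together with its box constraint \eqref{LP:0-1} is exactly what lets you interchange the vertex-weight $y_j$ for the edge-weight $y_{ij}$, after which the degree constraint $\sum_{j \in \Gamma(i)} y_{ij} \geq d y_i$ closes the proof.
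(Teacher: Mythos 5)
Your proof is correct and follows essentially the same route as the paper's: lower-bound $\sum_{j\in\Gamma(S)}y_j|\Gamma(j)\cap S|$ by $d\cdot\mathrm{LP}(S)$ via the symmetry and box constraints plus the degree constraint, feed this into Claim~\ref{clm:dkslocal}, and then observe that the hypothesis $\mathrm{LP}(S)/|S|\geq\rho/d$ forces the max in the denominator to be decided in your favor. The paper handles the same ceiling/roundoff issue by simply declaring it negligible, which is no less (and no more) rigorous than your parenthetical.
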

\begin{proof}
By the LP constraints~\eqref{LP:0-1} and~\eqref{LP:d-bound}, we have
\[ \sum_{j\in\Gamma(S)}y_j|\Gamma(j)\cap S|\geq\sum_{j\in\Gamma(S)}\sum_{i\in\Gamma(j)\cap S}y_{ij}=\sum_{i\in S}\sum_{j\in\Gamma(i)}y_{ij}\geq d\text{LP}(S). \]
By Claim~\ref{clm:dkslocal}, Dks-Local$(S,k)$ outputs a subgraph with average degree at least $d\text{LP}(S)/\max\{|S|,k'\}$, where $k'=\text{LP}(\Gamma(S))$ (note that we are ignoring some roundoff error which will be negligible in the context of the algorithm).

If $k'\leq|S|$, then we are done, since by our assumption, $d\text{LP}(S)/|S|\geq\rho$. Now suppose $k'\geq|S|$. The output graph has average degree at least $d\text{LP}(S)/k'$. If this is at least $\rho$, we are done. If not, $k' \geq d\text{LP}(S)/\rho$, and since $k' = \text{LP}(\Gamma(S))$, we get the desired result.
\end{proof}

Let us now consider a hair step. In this case, the algorithm performs DkS-Local on the current
set, and then picks a vertex $j\in V$ to act as a ``leaf". The new $S$ is then set to equal $S\cap\Gamma(j)$.
The following lemmas prove that either DkS-Local finds a sufficiently dense subgraph, or we can pick $j$ so as to
inductively maintain certain bounds. Let us first prove a simple averaging lemma.

\begin{lemma}\label{lem:avging}
Let $x_j$, ($1 \leq j \leq n$) be reals in $[0,1]$, with $\sum_j x_j \leq k$. Let
$P_j$ and $Q_j$ be some non-negative real numbers such that for some $P,Q>0$,
\begin{equation}\label{eqn:avging:hypo}
\sum_j x_j P_j \geq P \mbox{, and } \sum_j x_j Q_j \leq Q.
\end{equation}
Then there exists an $j$ such that $P_j \geq P/(2k)$ and $P_j/Q_j \geq P/(2Q)$.
\end{lemma}
\begin{proof}
By our assumption 
$\sum_j x_j \big( P_j - \frac{P}{2k}
\big) \geq P-\frac{P}{2}=\frac{P}{2}$. Thus from~\eqref{eqn:avging:hypo}, it
follows that there exists an $j$ such that $x_j>0$ and
\[ P_j - \frac{P}{2k} \geq \frac{P}{2Q}\cdot Q_j. \] This choice of $j$ clearly satisfies the required properties.
\end{proof}


\begin{lemma}\label{lem:contract-guarantee}
Let $S \subseteq V$, and  let $\{y_i\}\in \emph{DkS-LP}(G,k,d)$ be an LP solution (for which there exist corresponding $\{y_{ij}\}$). Then for any $\rho\geq 1$, either
\emph{DkS-Local}$(S,k)$ outputs a $\rho$-dense subgraph, or there exists some vertex $j \in G$, such that $y_j>0$, and
\[ \emph{LP}_{\{y_{ij}/y_j\mid i\in V\}}(S\cap\Gamma(j)) \geq \frac{d
        \cdot \emph{LP}_{\{y_i\}}(S)}{2k}\text{, and}\]
\[  \emph{LP}_{\{y_{ij}/y_j\mid i\in V\}}(S\cap\Gamma(j)) /|S \cap \Gamma(j)| \geq \frac{d\cdot\emph{LP}_{\{y_i\}}(S)}{2\rho \cdot \max\{ k,|S| \}}.\]
\end{lemma}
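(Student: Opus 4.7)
The plan is to apply the averaging Lemma~\ref{lem:avging} with weights $x_j = y_j$ (so $\sum_j x_j \leq k$ by constraint~\eqref{LP:k-bound}), choosing $P_j$ to encode the LP quantity we want to make large and $Q_j := |S \cap \Gamma(j)|$. Specifically, for $y_j > 0$ I would set
\[ P_j := \text{LP}_{\{y_{ij}/y_j\}}(S \cap \Gamma(j)) = \frac{1}{y_j}\sum_{i \in S \cap \Gamma(j)} y_{ij}, \]
so that $x_j P_j = \sum_{i \in S \cap \Gamma(j)} y_{ij}$ (and for $y_j = 0$ the term contributes nothing to either sum). This choice makes the two conclusions of Lemma~\ref{lem:avging} read exactly as the two conclusions claimed in the lemma.

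Next, I would lower-bound $\sum_j x_j P_j$ by swapping the order of summation. Using that $G$ is undirected, so $i \in \Gamma(j) \Leftrightarrow j \in \Gamma(i)$, the sum rewrites as $\sum_{i \in S} \sum_{j \in \Gamma(i)} y_{ij}$, and then constraint~\eqref{LP:d-bound} gives $\sum_j x_j P_j \geq d \cdot \text{LP}(S)$. So we may take $P := d \cdot \text{LP}(S)$.

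For the upper bound on $\sum_j x_j Q_j = \sum_j y_j |\Gamma(j) \cap S|$, I would invoke the failure hypothesis in conjunction with Claim~\ref{clm:dkslocal}. Setting $k' = \roundup{\text{LP}(\Gamma(S))}$, constraint~\eqref{LP:k-bound} ensures $k' \leq k$. By Claim~\ref{clm:dkslocal}, DkS-Local$(S,k)$ outputs a subgraph of average degree at least $\frac{1}{\max\{|S|,k'\}} \sum_{j \in \Gamma(S)} y_j |\Gamma(j) \cap S|$; if this is not $\rho$-dense, then $\sum_j y_j |\Gamma(j) \cap S| < \rho \cdot \max\{|S|,k'\} \leq \rho \cdot \max\{|S|,k\}$, so $Q := \rho \cdot \max\{|S|,k\}$ works.

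Plugging these choices into Lemma~\ref{lem:avging} delivers a vertex $j$ with $y_j > 0$ (which is built into the averaging lemma's proof, and is needed for $P_j$ to be well-defined) satisfying $P_j \geq P/(2k)$ and $P_j/Q_j \geq P/(2Q)$, which unfold to exactly the two inequalities in the statement. I do not anticipate any substantive obstacle: constraint~\eqref{LP:d-bound} supplies the lower bound needed for $P$, Claim~\ref{clm:dkslocal} supplies the upper bound needed for $Q$, and Lemma~\ref{lem:avging} was tailor-made to combine the two. The only care required is handling the $y_j=0$ terms consistently, and correctly exploiting the undirected symmetry when exchanging the order of summation.
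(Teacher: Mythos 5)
Your proposal is correct and follows essentially the same route as the paper: both apply Lemma~\ref{lem:avging} with $x_j=y_j$, lower-bound $\sum_j y_j\,\mathrm{LP}_{\{y_{ij}/y_j\}}(S\cap\Gamma(j))=\sum_{i\in S}\sum_{j\in\Gamma(i)}y_{ij}\geq d\cdot\mathrm{LP}(S)$ via constraint~\eqref{LP:d-bound}, and upper-bound $\sum_j y_j|\Gamma(j)\cap S|$ by $\rho\max\{|S|,k\}$ using Claim~\ref{clm:dkslocal} and the failure hypothesis. No gaps.
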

\begin{proof}
By constraints~\eqref{LP:d-bound} of the LP we have
\begin{equation}\label{eq:contract-guarantee:1}\sum_{j\in\Gamma(S)}y_j \emph{LP}_{\{y_{ij}/y_j\mid i\in V\}}(S\cap\Gamma(j))=\sum_{j\in\Gamma(S)}\sum_{i\in\Gamma(j)\cap S}y_{ij}=\sum_{i\in S}\sum_{j\in\Gamma(i)}y_{ij}\geq d\text{LP}_{\{y_i\}}(S).
\end{equation}
From Claim~\ref{clm:dkslocal}, it follows that if the subgraph found by DkS-Local has average degree less than $\rho$, we must have $\rho>\sum_{j\in\Gamma(S)}y_j|\Gamma(j)\cap S|/\max\{|S|,k'\}$, or in other words
\begin{equation}\label{eq:contract-guarantee:2}\sum_{j\in\Gamma(S)}y_j|\Gamma(j)\cap S|\leq\rho\max\{|S|,k'\}\leq\rho\max\{|S|,k\}.
\end{equation}

Thus the lemma follows immediately from Lemma~\ref{lem:avging} and equations~\eqref{eq:contract-guarantee:1} and~\eqref{eq:contract-guarantee:2}.
\end{proof}

We now formally describe the algorithm. It takes as input a graph $G$, a parameter $k$, and $\{y_i\}$, a solution to DkS-LP$_{r+2}(G,k,d)$. Throughout, a set $S \subseteq V$, and an LP solution $\{ y_i \}$ are maintained. 

\begin{center}\fbox{
\begin{minipage}{14 cm}
\textbf{DkS-Cat$_{r,s}(G,k,\{y_i\})$}
\begin{itemize}
    \item 
    Let $S_0=V$.
    \item For all $t=1,\ldots,s$, do the following:
    \begin{itemize}
        \item For $t>1$, let $H_t$ be the output of Procedure DkS-Local$(S_{t-1},k)$.
        \item If the interval $[(t-1)r/s, tr/s]$ contains an integer, perform a {\bf hair step}:\\
        Choose some vertex  $j_t$ as in Lemma~\ref{lem:contract-guarantee} (or for $t=1$, choose any $j_1$ such that $y_{j_1}>0$), and
        \begin{itemize}
            \item Let $S_t=S_{t-1}\cap\Gamma(j_t)$.
            \item Replace the LP solution $\{y_i\}$ with $\{y_{ij_t}/y_{j_t}\mid i\in V\}$.
        \end{itemize}
        \item Otherwise, perform a {\bf backbone step}:\\ Let $S_t=\Gamma(S_{t-1})$.
    \end{itemize}
    \item Output  the subgraph $H_t$ with the highest average degree.
\end{itemize}
\end{minipage}
}
\end{center}

Note that since the ``conditioning" (replacing $y_i$'s by $y_{ij}/y_j$) in the hair steps is only performed $r+1$ times, then by constraint~\eqref{LP:recursion}, at every step of the algorithm $\{y_i\}$ satisfies DkS-LP$(G,k,d)$.

\paragraph{A combinatorial algorithm.} Note that the only time the algorithm uses the LP values is in choosing the leaves. Thus, even in the absence of an LP solution, the algorithm can be run by 
trying all possible sequences of leaves (the analysis will still work by replacing the LP solution with the optimum $0-1$ solution). While this would take time $O(n^{r+1})$ as opposed to linear time (for the LP-based rounding algorithm), this 
is comparable to the time needed to solve the LP. An 
interesting open question is if it is possible to avoid the 
dependence on $r$, the number of leaves.

\subsection{Performance guarantee}

The analysis is quite straightforward. We follow the various steps, and each time apply either Lemma~\ref{lem:expand-guarantee} or Lemma~\ref{lem:contract-guarantee}, as appropriate. Our main result is the following:

\begin{theorem}\label{thm:main} Let $s>r>0$ be relatively prime integers, let $G$ be an undirected (bipartite) graph with maximum degree $\leq D=n^{r/s}$, let $\{y_i\}\in
\text{\emph{DkS-LP}}_{r+1}(G,k,d)$. 
and define $\triv = \max\{Dk/n,1\}$. Then if $d'$ is the average degree of the subgraph found by \emph{DkS-Cat}$_{r,s}(G,k,\{y_i\})$, we have
\[ \max\{d',\triv\}=\Omega(d/D^{(s-r)/s}). \]
\end{theorem}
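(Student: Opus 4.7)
The plan is to proceed by a case distinction on whether the trivial bound $\triv = \max\{Dk/n,1\}$ already suffices. If $\triv = \Omega(d/D^{(s-r)/s})$, the theorem holds immediately, so I may assume $\triv$ is smaller. In this regime I set a target density $\rho := c\,d/D^{(s-r)/s}$ for a sufficiently small constant $c$, and argue that some invocation of \emph{DkS-Local} inside \emph{DkS-Cat} outputs a subgraph of average degree $\geq \rho$. The argument is by contradiction: assume no invocation succeeds, and show that then the LP quantity $L_t := \text{LP}(S_t)$ strictly exceeds $N_t := |S_t|$ at step $t=s$. This contradicts the trivial inequality $L_t \leq N_t$, which holds because $y_i\leq 1$ by constraint~\eqref{LP:0-1}, and is preserved under hair-step conditioning (since $y_{ij}=y_{ji}\leq y_j$ implies $y_{ij}/y_j \leq 1$).

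The induction would maintain, after each step $t$, lower bounds on $L_t$ and on the density $L_t/N_t$. After the special first hair step, constraint~\eqref{LP:d-bound} applied at $j_1$ yields $L_1\geq d$, while the maximum-degree bound gives $N_1\leq D$, so $L_1/N_1 \geq d/D$. For each subsequent step $t$ I apply the appropriate guarantee. A backbone step invokes Lemma~\ref{lem:expand-guarantee} (whose precondition $L_{t-1}/N_{t-1}\geq \rho/d$ must be verified) to give $L_t \geq dL_{t-1}/\rho$, together with $N_t \leq D\cdot N_{t-1}$ from the max-degree bound. A hair step invokes Lemma~\ref{lem:contract-guarantee} to give both $L_t\geq dL_{t-1}/(2k)$ and the sharper density estimate $L_t/N_t \geq dL_{t-1}/(2\rho\max\{k,N_{t-1}\})$. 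Since the final step $t=s$ is always a hair step (its interval $[(s-1)r/s,\,r]$ contains the integer $r$), I would use the second conclusion of Lemma~\ref{lem:contract-guarantee} at step $s$ to conclude $L_s/N_s > 1$, the desired contradiction.

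The main technical obstacle is verifying the backbone precondition $L_{t-1}/N_{t-1}\geq \rho/d$ throughout the run. Each backbone step degrades the ratio $L/N$ by a factor $d/(\rho D) = D^{-r/s}$ (using $\rho = c\,d/D^{(s-r)/s}$), while hair steps only partially restore it. The saving structure is precisely that of the $(r,s)$-caterpillar: after $t$ steps, the set size $N_t$ is governed by $\frc(tr/s)$ in exactly the way mirrored by the random-graph analysis of Claim~\ref{G(n,p)-bound}, and this is balanced against the $L_t$ gains edge-by-edge along the template. Combining with the simplification $kD=n$ from Appendix~\ref{sec:boundproduct} (which converts the $D^{(s-r)/s}$ factor in $\rho$ into $k^{r/s}$) then yields the required $\rho = \Omega(d/D^{(s-r)/s})$. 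I expect the bookkeeping to be intricate but routine; the conceptual insight is that the $(r,s)$-caterpillar encodes exactly the step-by-step trade-off between LP mass and candidate-set size needed for the final-step contradiction.
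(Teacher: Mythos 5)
Your proposal follows essentially the same route as the paper's proof: the same target $\rho=\Theta(d/D^{(s-r)/s})$, the same induction maintaining lower bounds on $\mathrm{LP}(S_t)$ and $\mathrm{LP}(S_t)/|S_t|$ via Lemmas~\ref{lem:expand-guarantee} and~\ref{lem:contract-guarantee}, and the same final contradiction $\mathrm{LP}(S_s)>|S_s|$ at the last step (which is indeed a hair step). The one caveat is that the paper deliberately avoids the $kD=n$ reduction you invoke at the end (it would cost a $\sqrt{\log n}$ factor) and instead threads $\triv\geq\max\{1,kD/n\}$ through the induction to handle the $|S_t|<k$ case of a hair step; otherwise the bookkeeping you defer is carried out in Lemma~\ref{lem:main} and Corollary~\ref{cor:main} exactly as you anticipate.
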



Note that when the log-density $\alpha$ of $G$ is not rational, we can choose rational $\alpha \leq r/s \leq \alpha +\eps$ for any small $\eps>0$. We then still appeal to Theorem ~\ref{thm:main} as before, though the greedy algorithm might only return a subgraph of average degree $\triv'>\triv/n^{\eps}$. Thus, the loss in the approximation ratio is at most $n^{\eps}$. A fairly straightforward calculation shows that this implies a $O(n^{1/4+\eps})$-approximation in $O(n^{O(1/\eps)})$ time for all $\eps>0$ (including $\eps=1/\log n$).

Before going into the details of the proof, let us note the similarities between the algorithm and the random models discussed earlier. Recall that in the random case, the sets $S(t)$ (corresponding to $S_t$ in the algorithm) had cardinality tightly concentrated around $n^{\frc(tr/s)}$. Similarly here, if we assume that $k=n/D(=D^{(s-r)/r})$, and that $d$ (the density of the subgraph implied by the LP) is at least $\rho k^{r/s}$ (for some $\rho\geq 1$), then we show (see Corollary~\ref{cor:main}) that if until step $t$ the algorithm has not found an $\Omega(\rho)$-dense subgraph, then the current candidate set satisfies
$$\frac{\text{LP}(S_t)}{|S_t|}>\rho^{-\frc(tr/s)\cdot s/r}\left(\frac{k}{n}\right)^{\frc(tr/s)}=\left(\frac{1}{D}\right)^{\frc(tr/s)},$$ which will yield a contradiction after the final step (when $t=s$).

One difficulty is that we avoid making the assumption that $kD=n$ (which is possible, but would incur a $O(\sqrt{\log n})$ loss in the approximation guarantee). Instead, we use the fact that the greedy algorithm finds a $k$-subgraph with average degree $\triv\geq\max\{1,Dk/n\}$. Specifically, we show that at step $t$ of the algorithm, either a subgraph with average degree $\Omega(\rho)$ has already been found, or the greedy algorithm gives the desired approximation (i.e.\ $\triv\geq\rho$), or we have the desired lower bounds on LP$(S_t)$ and LP$(S_t)/|S_t|$.

\noindent {\bf Notation. } In what follows, we 
let $\rho = d/(2D^{(s-r)/s})$ denote the desired average degree of the output subgraph (up to a constant factor). 
We also write $L_t=\rounddown{tr/s}$. Note that the number of hair steps up to and including step $t$ is precisely $L_t+1$.

We now state the main technical lemma. 

\begin{lemma}\label{lem:main} Let $s>r>0$ be relatively prime integers, let $G$ be an undirected (bipartite) graph with maximum degree at most $D=n^{r/s}$, and let $\{y_i\}$ be a solution to \emph{DkS-LP}$_{r+1}(G,k,d)$. 
Let $\triv = \max\{Dk/n,1\}$. For $t=1,\ldots,s$, let $d'_t$ be the average degree of the densest of the subgraphs found by \emph{DkS-Cat}$_{r,s}(G,k,\{y_i\})$ up through step $t$. Then either $$\max\{d'_t,\triv\}=\Omega(\rho),$$ or we have
\begin{equation} \label{eq:lem:main}
\begin{split}
\emph{LP}(S_t)&\geq\frac{d^t}{2^{L_t}\rho^{t-L_t-1}k^{L_t}}\text{, and}\\ \frac{\emph{LP}(S_t)}{|S_t|}&\geq\frac{d^t}{2^{L_t}\triv\rho^{t-1}D^{t-L_t}}\text{.}
\end{split}
\end{equation} 
\end{lemma}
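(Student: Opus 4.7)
The plan is to prove Lemma~\ref{lem:main} by induction on $t$, following the structure of the algorithm step by step and invoking Lemma~\ref{lem:expand-guarantee} or Lemma~\ref{lem:contract-guarantee} at each step depending on whether it is a backbone or a hair step. Throughout, we may assume $\rho\geq 1$ and $\gamma=o(\rho)$, since otherwise $\max\{d'_t,\gamma\}=\Omega(\rho)$ trivially.

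For the base case $t=1$, note that the interval $[0,r/s]$ contains the integer $0$, so step $1$ is a hair step. No call to DkS-Local is made; we simply pick any $j_1$ with $y_{j_1}>0$, set $S_1=\Gamma(j_1)$, and replace $\{y_i\}$ by $\{y_{ij_1}/y_{j_1}\}$. Applying the LP constraint~\eqref{LP:d-bound} to $i=j_1$ gives $\text{LP}(S_1)=\sum_{j\in\Gamma(j_1)}y_{j_1j}/y_{j_1}\geq d$, and the bound $|S_1|\leq|\Gamma(j_1)|\leq D$ together with $\gamma\geq 1$ gives $\text{LP}(S_1)/|S_1|\geq d/(\gamma D)$. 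Both match~\eqref{eq:lem:main} for $L_1=0$.

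For the inductive step, assume~\eqref{eq:lem:main} holds at step $t-1$ (else we are already done). At a backbone step ($L_t=L_{t-1}$), the algorithm calls DkS-Local$(S_{t-1},k)$ and then sets $S_t=\Gamma(S_{t-1})$. The inductive bound on $\text{LP}(S_{t-1})/|S_{t-1}|$, after substituting $d=2\rho D^{(s-r)/s}$, reduces the premise of Lemma~\ref{lem:expand-guarantee} to an inequality of the form $2^{t-L_{t-1}}\rho D^{1-\{tr/s\}}\geq\gamma$, which holds whenever $\gamma=o(\rho)$ since $D\geq 1$. If DkS-Local fails to output a $\rho$-dense subgraph, Lemma~\ref{lem:expand-guarantee} gives $\text{LP}(S_t)\geq d\,\text{LP}(S_{t-1})/\rho$; combined with $|S_t|\leq D|S_{t-1}|$, a direct plug-in of the IH (using $L_t=L_{t-1}$) produces exactly the two bounds in~\eqref{eq:lem:main}. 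At a hair step ($L_t=L_{t-1}+1$), if DkS-Local does not find a $\rho$-dense subgraph, Lemma~\ref{lem:contract-guarantee} yields a vertex $j_t$ satisfying $\text{LP}_{\text{new}}(S_t)\geq d\,\text{LP}(S_{t-1})/(2k)$ and $\text{LP}_{\text{new}}(S_t)/|S_t|\geq d\,\text{LP}(S_{t-1})/(2\rho\max\{k,|S_{t-1}|\})$. The first bound, combined with the IH and $L_t=L_{t-1}+1$, gives the required LP bound immediately.

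The main obstacle is the ratio bound at a hair step, where the factor $\max\{k,|S_{t-1}|\}$ behaves differently in two regimes. When $|S_{t-1}|\geq k$, the ratio bound reduces to $(d/(2\rho))\cdot\text{LP}(S_{t-1})/|S_{t-1}|$, and applying the IH on the ratio directly yields the claimed expression (with $\gamma$ and $D^{t-L_t}$ matching up correctly because $L_t-1=L_{t-1}$). When $|S_{t-1}|<k$, the naive bound from Lemma~\ref{lem:contract-guarantee} does not match~\eqref{eq:lem:main}, and this is the hard case. Here the plan is to argue that DkS-Local$(S_{t-1},k)$ itself already outputs a $\rho$-dense subgraph, so there is no need to maintain the inductive invariant past this point. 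Concretely, Claim~\ref{clm:dkslocal} gives average degree at least $d\,\text{LP}(S_{t-1})/\max\{|S_{t-1}|,k'\}\geq d\,\text{LP}(S_{t-1})/k$ (since $k'\leq k$ always). Plugging in the IH for $\text{LP}(S_{t-1})$ and substituting $d=2\rho D^{(s-r)/s}$ reduces the desired inequality $\geq\rho$ to $2^{t-L_{t-1}}\rho^{L_{t-1}+1}D^{t(s-r)/s}\geq k^{L_{t-1}+1}$. Using $k\leq n/D$ in the regime $\gamma=1$ and $D=n^{r/s}$, the exponent comparison collapses to $rt/s\geq L_t$, which is just the definition $L_t=\lfloor rt/s\rfloor$, so the inequality holds and DkS-Local produces the required $\rho$-dense subgraph. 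The case $\gamma>1$ is handled symmetrically by carrying an extra factor of $\gamma$ through the same computation. Thus, in every case, either a $\rho$-dense subgraph has been found or the two inequalities of~\eqref{eq:lem:main} propagate to step $t$, completing the induction.
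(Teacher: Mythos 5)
Your proof is correct, and it follows the paper's own induction almost exactly --- same base case, same use of Lemma~\ref{lem:expand-guarantee} at backbone steps (your direct verification of the premise $\mathrm{LP}(S_{t-1})/|S_{t-1}|\geq\rho/d$ from the inductive hypothesis is equivalent to the paper's appeal to Corollary~\ref{cor:main}), and the same use of Lemma~\ref{lem:contract-guarantee} for the LP bound and for the ratio bound when $|S_{t-1}|\geq k$.

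The one place you genuinely diverge is the hard sub-case $|S_{t-1}|<k$ at a hair step. The paper continues to propagate the invariant there: it takes the ratio bound $d\,\mathrm{LP}(S_{t-1})/(2\rho k)$ from Lemma~\ref{lem:contract-guarantee}, plugs in the inductive bound on $\mathrm{LP}(S_{t-1})$, and converts $1/k^{L_{t-1}+1}$ into the required $1/(\gamma D^{t-L_t})$ via the chain $\rho>\gamma\geq kD/n$ and $L_{t-1}+1\leq tr/s$ (so $n^{L_{t-1}+1}\leq D^{t}$). You instead short-circuit: you observe that Claim~\ref{clm:dkslocal} already guarantees DkS-Local$(S_{t-1},k)$ an average degree of at least $d\,\mathrm{LP}(S_{t-1})/k$ when $|S_{t-1}|<k$, and the same arithmetic ($k\leq\gamma n/D$, $\gamma<\rho$, $tr/s\geq L_t$) shows this is already at least $\rho$, so the first disjunct of the lemma holds at step $t$ and hence at every later step. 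I checked the exponent computation and it is right; both arguments use exactly the same three inequalities, but your version terminates the induction with a found subgraph rather than carrying the bound forward, which is arguably cleaner and makes the role of the greedy guarantee $\gamma\geq kD/n$ more transparent. Two cosmetic points: the standing assumption should be stated as $\gamma<\rho$ (up to the implicit constant) rather than $\gamma=o(\rho)$, and the claim ``DkS-Local already outputs a $\rho$-dense subgraph in this case'' is a strictly stronger statement than the lemma needs, so it is worth flagging explicitly that it is being used only to discharge the first disjunct.
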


The following simple corollary 
 immediately implies Theorem~\ref{thm:main} (by contradiction) when we take $t=s$.
\begin{corollary}\label{cor:main} In the notation of Lemma~\ref{lem:main}, either $\max\{d'_{t},\triv\}=\Omega(d/D^{(s-r)/s})$, or we have
\[ \frac{\emph{LP}(S_t)}{|S_t|}\geq\frac{2^{t-\rounddown{tr/s}}}{D^{\frc(tr/s)}}. \]
\end{corollary}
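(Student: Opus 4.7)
}
The plan is to obtain the statement directly from Lemma~\ref{lem:main} by algebraic simplification, using only the definition $\rho = d/(2D^{(s-r)/s})$ and the identity $L_t + \frc(tr/s) = tr/s$. No new structural argument about the algorithm is needed.

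First I would dispose of an easy case: if $\triv \geq \rho$ (up to a constant), then $\max\{d'_t,\triv\} \geq \triv = \Omega(\rho) = \Omega(d/D^{(s-r)/s})$, so the first alternative of the corollary holds and there is nothing to prove. Thus we may assume $\triv \leq \rho$, which means the $\rho/\triv$ factor appearing below is at least $1$.

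Next I would invoke Lemma~\ref{lem:main}: either $\max\{d'_t,\triv\} = \Omega(\rho)$ (again done), or the ratio bound
\[
\frac{\text{LP}(S_t)}{|S_t|} \;\geq\; \frac{d^{t}}{2^{L_t}\,\triv\,\rho^{t-1}\,D^{t-L_t}}
\]
holds. Substituting $d^{t} = (2\rho D^{(s-r)/s})^{t} = 2^{t}\rho^{t}D^{t(s-r)/s}$ into the right-hand side and cancelling, the bound reduces to
\[
\frac{\text{LP}(S_t)}{|S_t|} \;\geq\; \frac{2^{t-L_t}\,\rho}{\triv\,D^{(t-L_t)-t(s-r)/s}}.
\]
The key algebraic observation is that the exponent of $D$ in the denominator simplifies cleanly: $(t-L_t) - t(s-r)/s = tr/s - L_t = \frc(tr/s)$, since $L_t = \rounddown{tr/s}$. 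Using $\rho/\triv \geq 1$ from the case assumption then absorbs the $\triv$ and yields the desired inequality $\text{LP}(S_t)/|S_t| \geq 2^{t-L_t}/D^{\frc(tr/s)}$.

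I do not anticipate a serious obstacle here. The entire argument is a substitution plus one exponent calculation, and Lemma~\ref{lem:main} has already packaged the ``either a dense subgraph was found or the LP-ratio is large'' dichotomy that we need. The only thing to watch is to make sure the constant hidden in $\Omega(\cdot)$ in the first alternative is consistent with the constant lost when converting $\triv \geq c\rho$ into $\rho/\triv \leq 1/c$; this is purely a bookkeeping issue and poses no difficulty.
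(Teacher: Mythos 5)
Your proposal is correct and follows essentially the same route as the paper: invoke Lemma~\ref{lem:main}, dispose of the case $\triv\geq\rho$, substitute the definition $\rho=d/(2D^{(s-r)/s})$ to cancel $d^t$, and simplify the exponent via $t-L_t-t(s-r)/s=tr/s-L_t=\frc(tr/s)$. The only cosmetic difference is that you solve for $d^t$ in terms of $\rho$ whereas the paper substitutes $1/\rho^t$ in terms of $d$ and $D$; these are identical manipulations.
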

\begin{proof}
\begin{align*} \frac{\text{LP}(S_t)}{|S_t|}&\geq\frac{d^{t}}{2^{L_t}\triv\rho^{t-1}D^{t-L_t}}&\text{By Lemma~\ref{lem:main}}\\
&>\frac{d^{t}}{2^{L_t}\rho^tD^{t-L_t}}&\rho>\triv\\
&=\frac{d^{t}}{2^{L_t}D^{t-L_t}}\cdot\frac{(2D^{(s-r)/s})^t}{d^t}&\text{definition of }\rho\\
&=\frac {2^{t-L_t}}{D^{t-L_t-t(s-r)/s}}=\frac {2^{t-L_t}}{D^{tr/s-L_t}}.
\end{align*}
\end{proof}

Let us now proceed to the proof of Lemma~\ref{lem:main}.
\begin{proof}[Proof of Lemma~\ref{lem:main}]
We prove by induction that if the algorithm does not find an $\Omega(\rho)$ dense subgraphs in steps $1$ through $t$, the lower bounds \eqref{eq:lem:main} hold. Assume that $\rho>\triv(\geq 1)$ (otherwise we are done).

For $t=1$, the bounds hold trivially:  If $j_1$ is any vertex for which $y_{j_1}>0$, then we have  LP$_{\{y_{ij_1}/y_{j_1}\}}(S_1)\geq d$ (by constraint~\eqref{LP:d-bound})  and so LP$(S_1)/|S_1|\geq d/|\Gamma(j_1)|\geq d/D(\geq d/(\triv D))$, which is exactly what we need.

Now assume the lemma holds for some $1\leq t\leq s-1$. We will show it for $t+1$, considering separately backbone and hair steps.

First, suppose $t+1$ is a backbone step. Then the interval $[tr/s, (t+1)r/s]$ does not contain an integer, i.e.\ $tr/s<L_t+(s-r)/s$. By Lemma~\ref{lem:expand-guarantee}, if $LP(S_{t})/|S_{t}|\geq\rho/d$ then either the procedure DkS-Local$(S_{t},k)$ produces a  $\rho$-dense subgraph, or we have LP$(S_{t+1})=\text{LP}(\Gamma(S_{t}))\geq d\text{LP}(S_{t})/\rho$. In that case, since $|S_{t+1}|=|\Gamma(S_{t})|\leq D|S_{t}|$, the claim follows immediately from the inductive hypothesis. Thus it suffices to show that indeed $LP(S_{t})/|S_{t}|\geq\rho/d$. This follows from Corollary~\ref{cor:main}, which gives
\[ \frac{\text{LP}(S_{t})}{|S_{t}|}\geq\frac{2^{t-L_{t}}}{D^{tr/s-L_{t}}}>\frac {2^{t-L_{t}}}{D^{(s-r)/s}}=\frac{2^{t+1-L_t}\rho}{d}.\]

Now, suppose $t+1$ is a hair step. Then the interval $[tr/s,(t+1)r/s]$ does contain an integer, i.e.\ $tr/s\geq L_t+(s-r)/s$. Assuming Procedure DkS-Local$(S_t,k)$ does not return a subgraph with average degree at least $\rho$, by Lemma~\ref{lem:contract-guarantee}, there is some choice of vertex $j_{t+1}$ such that

\begin{equation}\label{eq:main.hair.1}
 \text{LP}_{\{y_{ij_{t+1}}/y_{j_{t+1}}\mid i\in V\}}(S_{t+1})=\text{LP}_{\{y_{ij_{t+1}}/y_{j_{t+1}}\mid i\in V\}}(S_t\cap\Gamma(j_{t+1})) \geq \frac{d
        \cdot \text{LP}_{\{y_i\}}(S_t)}{2k}\text{, and}\end{equation}
\begin{equation}\label{eq:main.hair.2}  \text{LP}_{\{y_{ij_{t+1}}/y_{j_{t+1}}\mid i\in V\}}(S_{t+1}) /|S_{t+1}| \geq \frac{d\cdot\text{LP}_{\{y_i\}}(S_t)}{2\rho \cdot \max\{ k,|S_t| \}}.\end{equation}
%
%
Thus, the required lower bound LP$(S_{t+1})$ follows from~\eqref{eq:main.hair.1} and the inductive hypothesis. If $|S_t|\geq k$, the bound on LP$(S_{t+1})/|S_{t+1}|$ similarly follows from~\eqref{eq:main.hair.2} and the inductive hypothesis. Thus, it remains to show the required bound when  $|S_t| < k$. In that case,~\eqref{eq:main.hair.2} gives 
\begin{align*}
\text{LP}(S_{t+1}) /|S_{t+1}| &\geq \frac{d\text{LP}(S_t)}{2\rho k}\\
&\geq\frac{d^{t+1}}{2^{L_t+1}\rho^{t-L_t}k^{L_t+1}}&\text{inductive hypothesis}\\
&>\frac{d^{t+1}}{2^{L_t+1}\rho^{t}}\cdot\frac{\triv^{L_t}}{k^{L_t+1}}&\text{since $\rho>\triv$}\\
&\geq\frac{d^{t+1}}{2^{L_t+1}\rho^{t}}\cdot\frac{D^{L_t+1}}{\triv n^{L_t+1}}&\text{since $\triv\geq kD/n$}\\
&\geq\frac{d^{t+1}}{2^{L_t+1}\rho^{t}}\cdot\frac{D^{L_t+1}}{\triv n^{(t+1)r/s}}&\text{since $tr/s\geq L_t+(s-r)/s$}\\
&\geq\frac{d^{t+1}}{2^{L_t+1}\rho^{t}}\cdot\frac{1}{\triv D^{t+1-(L_t+1)}}.&\text{since $D\geq n^{r/s}$}\\
\end{align*}

Since $L_{t+1}=L_t+1$, the bound follows. This concludes the proof.
\end{proof}

\section{An $n^{(1/4 - \eps)}$-approximation with `mildly' exponential run time} \label{sec:subexponential}

We will now consider an extension of our approach to the case when
the log-density of the subgraph $H$ is (slightly) less than the
log-density of the host graph (a crucial case if one wants to go
beyond $n^{1/4}$-approximations). This is done at the expense of
running time -- we obtain a modification of our caterpillar-based
algorithm, which yields an approximation ratio of
$O(n^{(1-\eps)/4})$ approximation which runs in time
$2^{n^{O(\eps)}}$. The main modification is that for each leaf,
rather than picking an individual vertex, we will pick a cluster
of roughly $O(n^{\eps})$ vertices (which is responsible for the
increased running time). The cluster will be used similarly to a
single leaf vertex: rather than intersecting the current set with
the neighborhood of a single vertex, we will intersect the current
set with the neighborhood of the cluster (i.e., the union of all
neighborhoods of vertices in the cluster).

\paragraph{Motivation.}
We now describe briefly why we could expect such a procedure to work
(and why we {\em need} to look at sets of size roughly
$n^{\eps}$). Suppose we are given a random graph $G$ with log-density
$\rho$.  Let $r/s$ denote a rational number roughly equal to $\rho +
\delta$ (for some small constant $\delta$). Suppose we call an $(r+1)$
tuple of leaves `special' if there is an $(r,s)$ caterpillar
`supported' on it, in the sense of Section~\ref{sec:dvsr}. Since $r/s
> \rho$, most $(r+1)$-tuples of vertices are not special.

The crucial observation is the following: say we pick a set $S$ of
vertices, and ask how many $(r+1)$ tuples from $S$ are special. This
number turns out to be `large' (appropriately defined later) roughly
iff $|S| > n^{\delta}$. Now suppose we had planted a random graph $H$
on $k$ vertices and log-density $\rho - \eps$ in $G$. Further suppose
$\delta$ is such that $k^{\eps + \delta} \ll n^{\delta}$ (since $k$ is
much smaller than $n$, we can always choose this so, by setting
$\delta$ to be a constant multiple of $\eps$). By the above claim,
sets in $H$ of size $k^{\eps + \delta}$ would have a `large' number of
special tuples (since $r/s = (\text{log-density of }H) + \eps +
\delta$). But this number, by choice is much smaller than
$n^{\delta}$, thus if there was no $H$ planted, sets of size $k^{\eps
  + \delta}$ would not have many special tuples!

This gives a distinguishing algorithm which runs in time roughly
$\binom{n}{k^{\eps + \delta}}$. Let us now develop the algorithm in
detail, for arbitrary graphs.

To simplify the presentation, we will not describe the LP setting
here, but only consider the equivalent of the combinatorial algorithm
described earlier (``guessing the leaves"). Due to the already large
running time, this will make little difference in terms of
efficiency. Before we introduce the algorithm and provide a sketch of
the analysis, let us state the main result of this section:

\begin{theorem}\label{thm:subexponential} For every $\eps>0$, there is a randomized $O(2^{n^{6\eps}})$-time algorithm which for every graph $G$ with high probability finds a $k$-subgraph whose average degree is within $\tilde{O}(n^{(1-\eps)/4})$ of the optimum.
\end{theorem}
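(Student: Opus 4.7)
The plan is to extend the caterpillar-based algorithm of Section~\ref{sec:main} by replacing each leaf vertex of the $(r,s)$-caterpillar template with a cluster of $K=n^{\Theta(\eps)}$ vertices. Let $\alpha$ denote the log-density of $G$ (so $D=n^{\alpha}$) and $\beta$ the log-density of the densest $k$-subgraph $H$. When $\beta\geq\alpha$, Theorem~\ref{thm:main} already yields an $n^{1/4}$-approximation, so the hard case is $\beta<\alpha$; here I deliberately pick rational $r/s$ \emph{above} $\alpha$, shifted just enough to bring the approximation $D^{(s-r)/s}$ (or rather its cluster analog) down to $\tilde O(n^{(1-\eps)/4})$.

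I would define a modified algorithm \textbf{DkS-CatCluster}$_{r,s,K}$ that follows the $(r,s)$-caterpillar template step by step: backbone steps are unchanged ($S_t \leftarrow \Gamma(S_{t-1})$), but each hair step now exhaustively enumerates every $K$-subset $C_t\subseteq V$ and sets $S_t = S_{t-1}\cap \Gamma(C_t)$, running \textbf{DkS-Local}$(S_t,k)$ after every update and retaining the densest subgraph ever produced. Since there are $r+1$ hair steps and each exhausts $\binom{n}{K}\leq n^K$ possibilities, the total running time is at most $n^{(r+1)K}\cdot\mathrm{poly}(n) = 2^{(r+1)K\log n}$; with $r,s$ constants and $K=n^{5\eps}$ this is $2^{n^{6\eps}}$ as required. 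Because the running time is already super-polynomial, I would work with the combinatorial ``guess the leaves'' variant and dispense with the LP, replacing the fractional solution by the $0/1$ indicator $\mathbf{1}_H$ in the analysis.

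For correctness I would replay the inductive argument of Lemma~\ref{lem:main}. The main new ingredient is a cluster analog of Lemma~\ref{lem:contract-guarantee}: as long as the current invariants on $|S_{t-1}\cap H|$ and $|S_{t-1}|$ hold, there must exist \emph{some} $K$-cluster $C\subseteq V$ such that the restricted neighborhood $S_{t-1}\cap\Gamma(C)$ satisfies the next-step bound on $|S_t\cap H|/|S_t|$. This is proved by a double-counting argument analogous to the one in Section~\ref{sec:dvsr}: the number of $(r,s)$-caterpillars in $H$ whose $r+1$ leaves are partitioned into $K$-sized blocks is at least $k d^s\cdot K^{r+1}$ (counting once per leaf-ordering), while the number of candidate cluster-tuples is at most $k^{K(r+1)}$, so averaging forces some cluster sequence to support many caterpillars. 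Tracing the sets $S_t$ through all $s$ steps then either uncovers an $\Omega(\rho)$-dense subgraph along the way via \textbf{DkS-Local}, or contradicts the analog of Corollary~\ref{cor:main} at $t=s$.

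The main obstacle, as already flagged in the ``Motivation'' paragraph, is the delicate simultaneous balancing of $K$, the gap $r/s-\alpha$, and the target $\eps$. The cluster must be \emph{large enough} that the $H$-side caterpillar count beats the averaging penalty $k^{K(r+1)}$, which forces $K$ to grow as the gap between $r/s$ and $\beta$ shrinks; but it must also be \emph{small enough} that the random-graph-side candidate sets (analog of Claim~\ref{G(n,p)-bound}) stay tightly concentrated, so that intersecting with $\Gamma(C)$ does not blow up $|S_t|$ beyond what the induction can absorb. Fitting both constraints together is what pins down $K = n^{\Theta(\eps)}$ and ultimately yields the $n^{(1-\eps)/4}$ guarantee; a mild additional $\mathrm{polylog}(n)$ loss arises from enumerating $r,s$ and guessing the log-density of $H$, together with the standard bucketing simplifications of Section~\ref{sec:notation.and.simplification}.
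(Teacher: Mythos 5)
Your high-level plan---replace each leaf of the $(r,s)$-caterpillar by a cluster of $n^{\Theta(\eps)}$ vertices, enumerate clusters exhaustively, and rerun the induction of Lemma~\ref{lem:main} with a cluster analog of Lemma~\ref{lem:contract-guarantee}---is the same as the paper's (Theorem~\ref{thm:subexponential-technical}, algorithm DkS-Exp). The genuine gap is in how you pass from that plan to Theorem~\ref{thm:subexponential}: a \emph{uniform} cluster size $K=n^{5\eps}$ does not work for all $k$. After normalizing $D=n/k=n^{\alpha}$ with $k=n^{\beta}$, $\alpha=1-\beta$ (Lemma~\ref{lem:subexp-preprocessing}), the cluster size the induction requires is $C=n^{2\beta\eps/(2\beta\eps+\alpha)}$; the cluster must contain roughly $k^{\,\delta+\eps}$ vertices of $H$ where $\delta=r/s-\alpha\geq\beta\eps/(1-\beta)$ is the shift needed to beat the threshold, and this exponent tends to $1$ as $\beta\to1$. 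So for large $k$ you cannot simultaneously keep the running time at $2^{n^{O(\eps)}}$ and have the correctness analysis close. The paper's proof of Theorem~\ref{thm:subexponential} is precisely the case split you omit: for $\beta\leq 5/7$ the exponent $2\beta\eps/(2\beta\eps+(1-\beta))$ is at most $5\eps$ and the cluster algorithm suffices, while for $\beta>5/7$ one falls back on the plain Section~\ref{sec:main} algorithm, whose ratio $n^{\alpha(1-\alpha)}<n^{10/49}<n^{(1-\eps)/4}$ already beats the target once $\eps\leq 1/6$ (with $\eps>1/6$ handled by brute force). Your opening claim that ``$\beta\geq\alpha$ is easy because Theorem~\ref{thm:main} gives $n^{1/4}$'' also does not dispose of that case: at $\alpha=1/2$, $k=\sqrt n$ the ratio of Theorem~\ref{thm:main} is exactly $n^{1/4}$, not $n^{(1-\eps)/4}$, even when $H$ is denser than $G$; the easy/hard split is governed by $k$, not by the sign of $\beta-\alpha$.

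Two further points in your cluster step would not survive scrutiny. First, the averaging you propose---at least $kd^sK^{r+1}$ caterpillars against at most $k^{K(r+1)}$ candidate cluster-tuples---proves nothing, since $k^{K(r+1)}$ dwarfs the numerator. The correct argument (Lemma~\ref{lem:cluster-guarantee}) is a per-hair-step averaging over single vertices of $H$ as in Lemma~\ref{lem:avging}, from which one extracts $C$ good vertices and gains a factor $C$ in $\sum_{j\in J}|\Gamma(j)\cap S\cap H|$. Second, that sum can badly overcount $|S\cap\Gamma(J)\cap H|$ because the neighborhoods of the $C$ cluster vertices may overlap; the paper absorbs an extra factor $\rho$ here and adds an escape clause (if the overlap exceeds $\rho$, the bipartite graph on $(J,S\cap\Gamma(J))$ is itself $\rho$-dense and is output directly by an extra call to DkS-Local). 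Without this, your inductive lower bound on $|S_t\cap H|$ is not valid.
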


Theorem~\ref{thm:subexponential} is an immediate corollary of the performance guarantee of the modified algorithm, which is as follows:

\begin{theorem}\label{thm:subexponential-technical} For every $0<\eps<\frac12$, and every $0<\beta<1$, there is a randomized algorithm which for every instance of Dense $k$-Subgraph with $k=n^{\beta}$, finds a $k$-subgraph whose average degree is within $\tilde{O}(n^{\beta(1-\beta)(1-\eps)})$ of the optimum (with high probability). Moreover, this algorithm runs in time at most $2^{\tilde{O}(n^{2\beta\eps/(2\beta\eps+(1-\beta))})}$.
\end{theorem}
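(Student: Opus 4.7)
The plan is to adapt the \textbf{DkS-Cat}$_{r,s}$ algorithm from Section~\ref{sec:main} by replacing each single-vertex leaf with a \emph{cluster} of $m$ vertices, and to perform exhaustive search over cluster choices rather than invoking an LP relaxation (since the run-time is already super-polynomial, the LP buys nothing). As preliminaries, apply the preprocessing from Appendix~\ref{sec:boundproduct} to reduce to the case $kD = n$, so that the host graph has log-density $\rho = 1-\beta$. Then fix a rational $r/s$ strictly above $\rho$, specifically $r/s \approx \rho + \delta$ with $\delta = \Theta(\beta\eps)$, and a cluster size $m = n^{\gamma}$ with $\gamma = 2\beta\eps / (2\beta\eps + (1-\beta))$. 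These values will be calibrated so that the final approximation is the claimed $\tilde O(n^{\beta(1-\beta)(1-\eps)})$ and the search cost $\binom{n}{m}^{r+1} = 2^{\tilde O(n^\gamma)}$ matches the stated run-time bound.

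At each hair step $t$, the algorithm enumerates every $m$-element subset $C_t \subseteq V$ and updates $S_t \leftarrow S_{t-1} \cap \Gamma(C_t)$, where $\Gamma(C_t)=\bigcup_{j\in C_t}\Gamma(j)$, as indicated by the motivation paragraph of Section~\ref{sec:subexponential}. Backbone steps remain unchanged, $S_t \leftarrow \Gamma(S_{t-1})$. After every step, invoke \textbf{DkS-Local}$(S_t, k)$ and return the densest subgraph seen across all steps and all cluster sequences. The run-time is dominated by the $\binom{n}{m}^{r+1} = 2^{\tilde O(rm\log n)}$ enumerations, which, combined with $s = O(1/(\beta - \delta))$, gives the claimed bound.

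The analysis mirrors the combinatorial reinterpretation of Lemma~\ref{lem:main}. The ``correct'' cluster to choose at each hair step is one lying entirely in $H$; such a choice is guaranteed to be examined by exhaustive search. The key quantitative observation is that for any cluster $C \subseteq H$ of size $m$, the fraction of $H$-vertices \emph{missed} by $\Gamma(C)$ is at most $(1 - d/k)^m \leq \exp(-dm/k)$, so as long as $dm/k = \omega(\log n)$, essentially all of $S_{t-1}\cap H$ survives the hair step. At the same time, the overall size $|S_t|$ may grow by roughly a factor $m$ relative to the single-vertex case, since up to $m$ different neighborhoods are unioned. Tracking both quantities through the $s$ steps yields, in the spirit of Corollary~\ref{cor:main}, either an output of density $\Omega(d/(D^{(s-r)/s}\cdot\mathrm{poly}(m)))$ from some call to \textbf{DkS-Local}, or a contradiction with the final $|S_s|$.

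The hard part is balancing the three competing parameters: the slack $\delta = r/s - \rho$, the cluster size $m$, and the resulting approximation loss. Choosing $\delta$ too small forces $m$ to be very large in order to preserve $H$-vertices across hair steps, blowing up the run time; choosing $\delta$ too large makes the baseline caterpillar approximation weaker, leaving no room for improvement. The setting $\delta = \Theta(\beta\eps)$ together with $\gamma = 2\beta\eps/(2\beta\eps + (1-\beta))$ is precisely the solution of this trade-off, and verifying that it simultaneously delivers the claimed approximation ratio $\tilde O(n^{\beta(1-\beta)(1-\eps)})$ and the claimed run-time $2^{\tilde O(n^{2\beta\eps/(2\beta\eps+(1-\beta))})}$ is the most delicate step of the proof.
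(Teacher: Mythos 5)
Your overall architecture matches the paper's: reduce to $kD=n$, take $r/s\approx(1-\beta)+\Theta(\beta\eps)$, replace each leaf by a cluster of size $n^{2\beta\eps/(2\beta\eps+(1-\beta))}$ found by exhaustive search, intersect $S_{t-1}$ with $\Gamma(J_t)$, and rerun the Lemma~\ref{lem:main}-style induction combinatorially. However, your key quantitative step -- the cluster analogue of the hair-step guarantee -- is wrong, and this is precisely the step that carries the whole argument.

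You claim that for a cluster $C\subseteq H$ of size $m$, the fraction of $H$-vertices missed by $\Gamma(C)$ is at most $(1-d/k)^m$, so that ``essentially all of $S_{t-1}\cap H$ survives the hair step.'' This treats $H$ as if distinct vertices of $C$ had independent, random neighborhoods. For a worst-case $H$ with minimum degree $d$, all $m$ cluster vertices can share a single common neighborhood of size $d$, in which case $\Gamma(C)$ covers only a $d/k$ fraction of $H$ no matter how large $m$ is. Moreover, even granting the random-graph heuristic, your own condition $dm/k=\omega(\log n)$ fails at the stated parameters: with $d\approx k^{1-\beta}$ and $m=n^{2\beta\eps/(2\beta\eps+(1-\beta))}$ one gets $dm/k=n^{-\beta^2+O(\eps)}\ll 1$ for small $\eps$; forcing $dm/k\gg\log n$ would require $m\gtrsim n^{\beta^2}$ and a run time of $2^{n^{\Omega(1)}}$ independent of $\eps$. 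The correct statement (Lemma~\ref{lem:cluster-guarantee} in the paper) is much weaker and is obtained by the same averaging argument as Lemma~\ref{lem:avging}: there \emph{exists} a cluster $J$ for which $|S\cap\Gamma(J)\cap H|\geq\Omega\bigl(dC\,|S\cap H|/(\rho k)\bigr)$, i.e.\ the survival fraction is only $\Theta(dC/(\rho k))$ -- a factor $C$ better than the single-leaf rate $d/(2k)$, and that factor of $C$ per hair step, accumulated over the $r+1$ hair steps, is exactly what compensates for $d$ lying below the host threshold $k^{r/s}$. Rebuilding your induction on this weaker (but true) bound, together with the second ratio bound $|S\cap\Gamma(J)\cap H|/|S\cap\Gamma(J)|\geq\Omega\bigl(d\,|S\cap H|/(\rho^2\max\{k,|S|\})\bigr)$, is what actually yields the trade-off you describe. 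Two smaller omissions: the averaging argument loses a factor $\rho$ because $|\Gamma(J)\cap S\cap H|$ may be much smaller than the edge count from $J$, which is why the paper's algorithm must also run DkS-Local on the bipartite graph $(J_t,S_t)$ at each hair step; and the preprocessing must also pin $d$ into the range $(k^{(1-\beta)(1-\eps)},k^{1-\beta})$ (Lemma~\ref{lem:subexp-preprocessing}), since otherwise the greedy baseline already meets the target.
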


Let us see that Theorem~\ref{thm:subexponential} follows easily:
\begin{proof}[Proof of Theorem~\ref{thm:subexponential}] For $\eps>1/6$, we can check every subgraph in time $2^n<2^{n^{6\eps}}$. Thus, we may assume that $\eps\leq 1/6$. Let $k=n^\beta$, and consider separately two cases. First, suppose that $\beta\leq5/7$. Note that the algorithm referred to in Theorem~\ref{thm:subexponential-technical} always gives an $\tilde{O}(n^{(1-\eps)/4})$-approximation (regardless of $\beta$). Thus we only need to bound the running time. For $\beta\leq 5/7$, the expression in the exponent is $$n^{2\beta\eps/(2\beta\eps+(1-\beta))}\leq n^{2\beta\eps/(1-\beta)}\leq n^{2(5/7)\eps/(2/7)}=n^{5\eps}.$$

Now consider $\beta>5/7$. Using the reduction in Appendix~\ref{sec:boundproduct}, we may assume that $D=n^\alpha$ for $\alpha<n^{2/7}$. Algorithm DkS-Cat described in Section~\ref{sec:main} (for $r/s\approx\alpha$) already gives (in quasi-polynomial time) up to a constant factor an approximation ratio of at most $$D^{1-\alpha}=n^{\alpha(1-\alpha)}<n^{10/49}<n^{(1-1/6)/4}\leq n^{(1-\eps)/4}.$$
\end{proof}

To simplify the analysis, we may perform additional preprocessing steps which will constrain the range of parameters:

\begin{lemma}\label{lem:subexp-preprocessing} For the purposes of finding an $\tilde{O}(n^{(1-\eps)/4})$-approximation for DkS, we may assume that the maximum degree of $G$ is $D=n/k$, and that moreover, for $\alpha$ s.t.\ $D=n^\alpha$, the minimum degree in $H$ is $d=k^{\alpha(1-\delta)}$ for some $0\leq\delta\leq\eps$.
\end{lemma}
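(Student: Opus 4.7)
The plan is to reduce an arbitrary DkS instance to one satisfying both conditions in the lemma by carving off the ``easy'' regimes where a simpler algorithm already attains the target ratio $\tilde O(n^{(1-\eps)/4})$. What remains in all other regimes is precisely the range of $d$ asserted by the lemma, on which the algorithm of Theorem~\ref{thm:subexponential-technical} will then operate.

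For the first condition, I will directly invoke the preprocessing of Appendix~\ref{sec:boundproduct}, which reduces any instance to one with $Dk=n$. Under this reduction $D=n/k$ and $\alpha=\log_n D=1-\log_n k$ are determined, so $k^{\alpha}=n^{\alpha(1-\alpha)}\leq n^{1/4}$ automatically.

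For the second condition, write $d=k^{\alpha(1-\delta)}$, with $\delta$ determined by $d$. If $\delta>\eps$, then $d<k^{\alpha(1-\eps)}=n^{\alpha(1-\alpha)(1-\eps)}\leq n^{(1-\eps)/4}$, so the trivial subgraph of density $\triv=\max\{kD/n,1\}=1$ (from Section~\ref{sec:notation.and.simplification}) already achieves approximation ratio at most $d\leq n^{(1-\eps)/4}$. On the other hand, if $\delta<0$, then $d>k^{\alpha}$; if additionally $d\leq n^{(1-\eps)/4}$, the trivial algorithm again suffices, so we may assume $d>n^{(1-\eps)/4}$. Here I will run DkS-Cat$_{r,s}$ of Section~\ref{sec:main} with a rational $r/s$ chosen so that both $r/s\geq\alpha$ and $(r/s)(1-r/s)\leq(1-\eps)/4$. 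By Theorem~\ref{thm:main} the output has density $\Omega(d/D^{(s-r)/s})=\Omega(d/n^{(1-\eps)/4})$, which beats the triviality threshold $\triv=1$ precisely because $d>n^{(1-\eps)/4}=D^{(s-r)/s}$ by assumption. Combining these cases, only the regime $0\leq\delta\leq\eps$ is left unhandled, and we may assume it in the sequel.

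The step I expect to require the most care is verifying that a suitable rational $r/s$ exists for every $\alpha$ in the $\delta<0$ sub-case, without blowing up the running time $n^{O(r)}$ of DkS-Cat. The condition $(r/s)(1-r/s)\leq(1-\eps)/4$ is equivalent to $|r/s-1/2|\geq\sqrt{\eps}/2$, and combined with $r/s\geq\alpha$ it admits a valid choice for every $\alpha\in[0,1)$: pick $r/s\approx\alpha$ if $\alpha\leq 1/2-\sqrt{\eps}/2$ or $\alpha\geq 1/2+\sqrt{\eps}/2$, and $r/s\approx 1/2+\sqrt{\eps}/2$ in the narrow middle range. Each of these targets can be approximated by rationals of denominator depending only on $\eps$, so $r$ remains a constant (depending on $\eps$) and DkS-Cat runs in polynomial time. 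The only remaining technicality is that an irrational $\alpha$ must be rounded up to a nearby rational $r/s$, losing a factor $n^{O(1/s)}$ in the approximation; this cost is absorbed exactly as in the discussion immediately following Theorem~\ref{thm:main}.
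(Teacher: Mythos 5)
Your handling of the upper bound on $d$ (the case $\delta<0$, i.e.\ $d>k^{\alpha}$) has a genuine gap. You propose to dispose of this case by running DkS-Cat$_{r,s}$ with $r/s\geq\alpha$ and $(r/s)(1-r/s)\leq(1-\eps)/4$, but when $\alpha$ lies in the window $(1/2-\sqrt{\eps}/2,\;1/2+\sqrt{\eps}/2)$ this forces $r/s-\alpha$ to be as large as $\sqrt{\eps}$, and Theorem~\ref{thm:main} no longer delivers what you claim. Its conclusion is $\max\{d',\triv\}=\Omega(d/D^{(s-r)/s})$ with $\triv=\max\{Dk/n,1\}$ computed from $D=n^{r/s}$; with the true degree bound $n^{\alpha}$ and $k=n^{1-\alpha}$, your choice gives $\triv=n^{r/s-\alpha}>1$ (not $\triv=1$ as you assert), and this $\triv$ is \emph{not} achievable by the greedy algorithm, whose guarantee is tied to the actual degree sequence. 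Concretely, for $\alpha=1/2$, $k=\sqrt n$, $d=k^{1/2+\gamma}$ with $0<\gamma\ll\eps$, taking $r/s=1/2+\sqrt{\eps}/2$ yields $\triv=n^{\sqrt{\eps}/2}\geq\Omega(d/D^{(s-r)/s})=\Omega(n^{\gamma/2+\eps/4})$, so the theorem's conclusion is satisfied by the (unachievable) $\triv$ alone and $d'$ may be $O(1)$; no subgraph of density $d/n^{(1-\eps)/4}$ is produced. Indeed, if your argument were sound, DkS-Cat would beat the $n^{1/4}$ log-density barrier in polynomial time at exactly the parameters Section~\ref{sec:subexponential} spends subexponential time on. The paper instead eliminates $d>k^{1-\beta}$ by \emph{random sparsification}: each edge is kept independently with probability $k^{1-\beta}/d$, bringing the optimum down to $\approx k^{1-\beta}$ (so $\delta=0$) while preserving approximation ratios w.h.p., and the resulting instance is then passed to the subexponential algorithm, not to DkS-Cat.

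Two smaller issues with the first normalization. Appendix~\ref{sec:boundproduct} only treats $Dk>n$ (by subsampling edges); it does not produce $D=n/k$ when $Dk<n$, a direction the paper handles by \emph{adding} the edges of $G(n,1/k)$ and arguing they contribute negligibly to any $k$-subgraph --- a step you omit. Moreover the two normalizations interact (fixing $d$ by sparsification changes the degree sequence, and fixing $D$ changes $d$), so the order matters: the paper prunes for $d$ first, then adjusts $D'$, and only afterwards rules out $d<k^{(1-\beta)(1-\eps)}$ via the triviality argument --- that last step is the one part of your write-up that matches the paper's proof.
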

\begin{proof}[Proof (sketch)] Let $\beta$ be s.t.\ $k=n^\beta$. Then perform the following steps:
\begin{enumerate}
  \item If $d>k^{1-\beta}$, prune the edges of $G$ such that every edge is retained independently with probability $\left(k^{1-\beta}/d\right)$.
  \item Now let $D'$ be the $k/2$-largest degree in the pruned graph.
  \begin{itemize}
    \item If $D'k<n$, add to the current set of edges the edges of the random graph $G(n,1/k)$. The new edges cannot make more then a negligible contribution to the densest $k$-subgraph.
    \item Otherwise, if $D'k>n$, prune as in Appendix~\ref{sec:boundproduct}.
  \end{itemize}
\end{enumerate}
  Step 1 already guarantees that $d\leq k^{1-\beta}$, and step 2 guarantees that the $k/2$-largest degree in the final graph is approximately $n/k$. By Lemma~\ref{lem:greedy}, w.l.o.g.\ we can ignore the first $k/2$ vertices. It only remains to check that $d$ is not too small (not less than $k^{(1-\beta)(1-\eps)}$). Arguing as in Appendix~\ref{sec:boundproduct}, we may assume that if $D'k>n$ then in the final graph we have $d>n^{(1-\eps)/4}=k^{(1-\eps)/(4\beta)}\geq k^{(1-\beta)(1-\eps)}$.
\end{proof}

In the modified algorithm, the backbone step and its analysis is the same as before. The main difference, as we mentioned, is in the leaf step. The following lemma (compare to Lemma~\ref{lem:contract-guarantee}) gives the guarantee of a leaf step when the current set is intersected with the neighborhood of a cluster as opposed to the neighborhood of a single vertex.

\begin{lemma}\label{lem:cluster-guarantee}
Let $S \subseteq V$, and let $0<C<k/2$. Then for any $\rho\geq 1$, either
\emph{DkS-Local}$(S,k)$ outputs a $\rho$-dense subgraph, or there exists some cluster of $C$ vertices $J\subset V(H)$, such that either we can extract a $\rho$-dense $k$-subgraph from the bipartite subgraph induced on $(J,S\cap\Gamma(J))$ in linear time, or we have
$$|S\cap\Gamma(J)\cap H| \geq \Omega\left(\frac{d C\cdot |S\cap H|}{\rho k}\right)\text{, and}$$
$$\frac{|S\cap\Gamma(J)\cap H|}{|S \cap \Gamma(J)|} \geq \Omega\left(\frac{d\cdot|S\cap H|}{\rho^2 \cdot \max\{ k,|S| \}}\right).$$
\end{lemma}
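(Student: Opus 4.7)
The proof will follow the template of Lemma \ref{lem:contract-guarantee}, but with the LP solution replaced by the indicator vector of $H$ (that is, setting $y_i = 1_{i\in H}$) and with single-vertex averaging replaced by averaging over random $C$-subsets $J$ of $V(H)$. In the combinatorial setting, Claim \ref{clm:dkslocal} applied to the indicator of $H$ yields: if DkS-Local$(S,k)$ fails to return a $\rho$-dense subgraph, then $|E(S,H)| = \sum_{j\in H}|\Gamma(j)\cap S| \leq \rho\max\{|S|,k\}$. Throughout I write $d_H(v)=|\Gamma(v)\cap H|$ and $L=|S\cap H|$.

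The main step is to sample $J$ uniformly from $\binom{V(H)}{C}$ and to bound both $\expec[P_J]$ and $\expec[Q_J]$, where $P_J=|S\cap\Gamma(J)\cap H|$ and $Q_J=|S\cap\Gamma(J)|$. Elementary Bernoulli-type estimates give
\[ \tfrac{1}{2}\min(1,Cd_H(v)/k) \;\leq\; \Pr[v\in\Gamma(J)] \;\leq\; Cd_H(v)/k \]
for every $v\in V$, so that
\[ \expec[P_J] \;\geq\; \tfrac{1}{2}\sum_{v\in S\cap H}\min(1,Cd_H(v)/k), \qquad \expec[Q_J] \;\leq\; (C/k)|E(S,H)| \;\leq\; C\rho\max\{|S|,k\}/k. \]
A normalized analog of Lemma \ref{lem:avging} with weights $x_J = 1/\binom{k}{C}$ then yields a specific cluster $J$ satisfying $P_J = \Omega(\expec[P_J])$ and $P_J/Q_J = \Omega(\expec[P_J]/\expec[Q_J])$ simultaneously. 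When every $v\in S\cap H$ has $d_H(v)\le k/C$, the min-degree bound $d_H(v)\ge d$ converts $\expec[P_J]$ into $\Omega(CdL/k)$ and a direct substitution yields both desired inequalities (in fact with a factor $\rho$ to spare).

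The main obstacle is the ``saturated'' regime in which a non-negligible fraction of $v\in S\cap H$ has $d_H(v) > k/C$, so that the probabilities $\Pr[v\in\Gamma(J)]$ saturate at $\Theta(1)$ rather than growing linearly in $d_H(v)$; this decouples $\expec[P_J]$ from the minimum degree $d$ and is the reason the statement tolerates an extra factor of $\rho$ (and $\rho^2$ in the ratio) relative to what the clean case delivers. I would handle this via the escape valve in the statement: when many $v\in S\cap H$ are saturated, picking $J$ to consist of $C$ vertices of $V(H)$ with the largest degree into $S$ produces a bipartite subgraph $(J,S\cap\Gamma(J))$ carrying enough edges that DkS-Local applied to this subgraph extracts a $\rho$-dense $k$-subgraph in linear time. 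Making this case split quantitative---balancing the contribution of saturated versus unsaturated vertices of $S\cap H$ against the global budget $|E(S,H)|\leq\rho\max\{|S|,k\}$---is the most delicate part of the argument.
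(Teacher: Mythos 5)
Your proposal is correct in its main line, but it reaches the bounds by a genuinely different mechanism than the paper's (sketched) proof. The paper runs the averaging of Lemma~\ref{lem:avging} on the \emph{edge counts} $\sum_{j\in J}|\Gamma(j)\cap S\cap H|$ and $\sum_{j\in J}|\Gamma(j)\cap S|$, and then confronts the multiplicity problem head-on: the set $\Gamma(J)\cap S\cap H$ may be a factor $\rho$ smaller than the edge count, and if it is smaller still, the bipartite graph $(J,S\cap\Gamma(J))$ is itself $\rho$-dense -- that dichotomy is exactly what the escape valve and the extra $\rho$ in the statement are for. You instead average the \emph{coverage probabilities} $\prob[v\in\Gamma(J)]$ over a uniform random $C$-subset $J$ of $V(H)$, so the truncation $\min(1,Cd_H(v)/k)$ absorbs the multiplicity issue for free; this is why your clean case delivers both inequalities with a factor $\rho$ to spare, which is a real (if minor) gain over the paper's route. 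One correction to your case analysis: the ``saturated'' regime is not actually the obstacle you make it out to be. Since $d_H(v)\geq d$ for every $v\in S\cap H$, you have $\min(1,Cd_H(v)/k)\geq\min(1,Cd/k)$, so as long as $Cd\leq k$ (which holds throughout the application, where $C=n^{O(\eps)}$ and $d\leq k^{1-\beta}$) every vertex -- saturated or not -- contributes at least $Cd/(2k)$ to $\expec[P_J]$, and no case split is needed at all. The only regime where your main bound genuinely fails is $Cd>k$ with $\rho<Cd/k$, where the claimed lower bound $\Omega(CdL/(\rho k))$ exceeds $L$ itself and hence one of the first two alternatives of the lemma \emph{must} fire; there your appeal to the escape valve is the right move, and the budget $|E(S,H)|\leq\rho\max\{|S|,k\}$ combined with the fact that all of $S\cap H$ is then saturated is what should close it. Given that the paper itself only sketches this lemma, your level of detail is acceptable; just redirect the ``delicate'' effort from the broad saturated case to that narrow corner.
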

\begin{proof}[Proof (sketch)]
The proof follows the same lines as that of Lemma~\ref{lem:contract-guarantee}. This gives similar bounds with respect to $\sum_{j\in J}|\Gamma(j)\cap S\cap H|$. An extra $\rho$ factor is lost since the size of the set $\Gamma(J)\cap S\cap H$ may be significantly less than the number of incoming edges from $J$. However, if it is more than a factor $\rho$ less, then together with $J$ it forms a $\rho$-dense subgraph (and so a matching guarantee can be found by greedily picking highest $J$-degree vertices in S).
\end{proof}

We now present the modified algorithm. It takes as input a graph $G$ with maximum degree at most $n/k$ in which the densest $k$-subgraph has minimum degree in the range $(k^{(1-\beta)(1-\eps)},k^{(1-\beta)})$ where $\beta$ is such that $k=n^\beta$. These assumptions are valid by Lemma~\ref{lem:subexp-preprocessing}.

\begin{center}\fbox{
\begin{minipage}{14 cm}
\textbf{DkS-Exp$_\eps(G,k)$}
\begin{itemize}
    \item Let $\alpha=1-\beta$ (so $D=n^\alpha$). Let $\alpha'=\alpha+2\beta\eps$, and let $0<r<s<\log n$ be integers such that $|r/s-\alpha'|\leq1/\log n$. Let $C=n^{2\beta\eps/(2\beta\eps+\alpha)}$. Let $S_0=V$.
    \item For all $t=1,\ldots,s$, do the following:
    \begin{itemize}
        \item For $t>1$, let $H_t$ be the output of Procedure DkS-Local$(S_{t-1},k)$.
        \item If the interval $[(t-1)r/s, tr/s]$ contains an integer, perform a {\bf hair step}:\\
        ``Guess" a set $J_t$ of $C$ vertices and
        \begin{itemize}
            \item Let $S_t=S_{t-1}\cap\Gamma(J_t)$.
            \item Let $H'_t$ be the output of Procedure DkS-Local$(J_t,k)$ run on the subgraph of $G$ induced on $J_t\cup S_t$.
        \end{itemize}
        \item Otherwise, perform a {\bf backbone step}:\\ Let $S_t=\Gamma(S_{t-1})$.
    \end{itemize}
    \item Output the subgraph $H_t$ or $H'_t$ with the highest average degree.
\end{itemize}
\end{minipage}
}
\end{center}

Note that guessing $r$ clusters of size $C$ correctly (such that they
all satisfy Lemma~\ref{lem:cluster-guarantee}) takes time at most ${n
  \choose C}^r=2^{O(C \log n)}$, as required by
Theorem~\ref{thm:subexponential-technical}. The approximation
guarantee is verified by showing that the algorithm finds a subgraph
with density $\rho=\Omega(d/k^{\alpha(1-\eps)})$. The proof runs along
the same lines as Lemma~\ref{lem:main}, using a combination of
Lemma~\ref{lem:expand-guarantee} and Lemma~\ref{lem:cluster-guarantee}
to prove the following lower bounds inductively (assuming no
$\rho$-dense subgraph is found):
$$|S_t\cap H|>\frac{d^tC^{L_t+1}}{\rho^tk^{L_t}}\text{, and}$$
$$\frac{|S_t\cap
  H|}{|S_t|}>\frac{d^t}{\rho^{t+L_t}D^{t-L_t}}\text{.}$$ In
particular, a simple calculation shows that the second bound implies
$$\frac{|S_t\cap H|}{|S_t|}>\frac{1}{D^{tr/s-L_t}}$$ which for $t=s$
is a contradiction (indicating that a $\rho$-dense subgraph must have
been found).

\section{Going beyond log-density in random models}\label{sec:sublogdensity}
We now show that for a certain range of parameters, we can do better than
the log-density bound in Section~\ref{sec:random} and later. Specifically,
we show this for $D<\sqrt{n}$ (by the simplification in Section~\ref{sec:boundproduct},
we can use this interchangeably with the condition $k>\sqrt{n}$, which we
will in this section).

\subsection{Improvements in the Random Planted Model}
In the easiest model, the Random Planted Model, if $D\leq\sqrt{n}$
(that is, $p<1/\sqrt{n}$), then $H$ is simply composed of the $k$
vertices of highest degree in $G$ whenever the expected average
degree in $H$ is $\gg\sqrt{\log n}\sqrt{D}$ (this is because with
high probability, the minimum degree in $H$ is considerably larger
than the standard deviation of degrees in $G$). This is an
improvement over the log-density bound in this range which in the
worst case gives a distinguishing ratio of $D^{1-\alpha}$. A
similar argument using the standard deviation of intersections of
neighborhoods gives an $n^{1/4}$ distinguishing ratio even for
larger $D$, however for $D\geq\sqrt{n}$ it does not surpass the
log-density barrier.

\subsection{Improvements for the Dense in Random problem}
We now show that in the Dense in Random problem, a simple eigenvalue
approach does better than the log-density bound of
Section~\ref{sec:dvsr} for certain ranges of $k$ and the average
degree $D$. Suppose $G$ is a random graph, with log-density $\rho$. In
this case, the results of Section~\ref{sec:dvsr} allow us to
distinguish random $G$ from graphs which have a planted $k$-subgraph
of density $ \gg k^{\rho}$.

Let us now consider the second eigenvalue of the input graph. If it
were a random graph with log-density $\rho$, F\"uredi and
K\'omlos~\cite{furedikomlos} show that $\lambda_2 \leq O(n^{\rho/2})$,
w.h.p. Now suppose $G$ contains a subgraph $H$ on $k$ vertices with
min degree $d \gg n^{\rho/2} + \frac{kn^{\rho}}{n}$. Consider the
vector $x$ (in $\mathbb{R}^n$, with coordinates indexed by vertices of
$G$) defined by
\[ x_i = \begin{cases} 1 \text{, if } i \in V(H)\\ -\frac{k}{n-k} \text{ otherwise} \end{cases} \]
Note that $x$ is orthogonal to the all-ones vector.\footnote{which we
  assume, is the eigenvector corresponding to $\lambda_1$ -- this is
  not quite true, since the graph is not of uniform degree, but it
  turns out this is not a critical issue.} Further, if $A_G$ is the
adjacency matrix of $G$, we have that
\begin{align*}
\frac{x^T A_G x}{x^Tx} &= \frac{\sum_{(i,j) \in E(G)} x_i x_j}{k + \frac{k^2}{n-k}}\\
& \geq \frac{kd - k n^{\rho} \cdot \frac{k}{n-k}}{2k} \geq \frac{d}{2} - \frac{k n^{\rho}}{n}
\end{align*}

Thus if $d \gg n^{\rho/2} + \frac{kn^{\rho}}{n}$, we have $\lambda_2
\gg n^{\rho/2}$, and thus we can distinguish between the two cases. We
can now compare this threshold for $d$ to the one obtained by a
log-density consideration ($k^\rho$). Thus when $k \gg \sqrt{n}$, the
$\lambda_2$-based algorithm does much better (i.e., it can solve the
distinguishing problem for a smaller value of $d$).

\subsection{Improvements for Dense vs Random}
For the Dense vs Random question addressed in
section~\ref{sec:dvsr}, we consider a natural SDP relaxation for
the densest $k$ subgraph problem considered by \cite{FS}, and
show that this performs better than the algorithm presented
earlier in the parameter range $k>\sqrt{n}$. The SDP is as follows

\begin{eqnarray}\label{eqn:sdp}
\max \sum_{(i,j) \in E(G)} X_{ij} && \mbox{subject to}\\
\sum_{i} X_{ii} = k \label{p:eq1}\\
\sum_j X_{ij} = kX_{ii} && \mbox{for all $i$}\label{p:eq2} \\
X_{ij} \leq X_{ii} && \mbox{for all $i,j$}\label{p:eq3} \\
X \succeq 0 \label{p:eq4}
\end{eqnarray}
This is a relaxation for the problem and it is easy to see that there exists an SDP solution of value
$|E(H)|$, where $H$ is a $k$-subgraph. However we show that the SDP value for a random graph is upper bounded
by exhibiting a suitable dual solution.
\begin{theorem} \label{thm:sdp}
For a random graph $G(n,p)$ of average degree $D$ ($D=np$), the value of the SDP is at most $k
(\sqrt{D} + k^2 D/n)$ w.h.p.
\end{theorem}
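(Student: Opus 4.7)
The plan is to construct an explicit dual solution by exploiting the spectral structure of the random adjacency matrix $A_G$. First, I would rewrite the SDP objective as $\frac{1}{2}\langle A_G, X\rangle$ and split $A_G = pJ + M$, where $p = D/n$, $J$ is the all-ones matrix, and $M := A_G - pJ$ (absorbing the negligible diagonal correction $pI$ into $M$). By the F\"uredi--K\'omlos theorem (already invoked in Section~\ref{sec:sublogdensity}), $\|M\|_{\mathrm{op}} = O(\sqrt{D})$ with high probability: the large top eigenvalue of $A_G$ is precisely the near-uniform direction that we have peeled off.

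Next I would use the SDP constraints to evaluate the two resulting pieces. Summing the row constraint $\sum_j X_{ij} = k X_{ii}$ over $i$ and combining with $\sum_i X_{ii} = k$ yields $\langle J, X\rangle = \sum_{i,j}X_{ij}= k^2$, so $\langle pJ, X\rangle = Dk^2/n$ exactly. For the residual, $X \succeq 0$ together with $\mathrm{tr}(X) = k$ give the standard trace--operator-norm inequality $|\langle M, X\rangle| \leq \|M\|_{\mathrm{op}}\cdot\mathrm{tr}(X) = O(k\sqrt{D})$. Combining and accounting for the factor of $\tfrac{1}{2}$ yields a bound of order $k\sqrt{D} + k^2 D/n$, which is absorbed into $k(\sqrt{D} + k^2 D/n)$ as claimed. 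The corresponding dual solution assigns multipliers $\mu_i = p$ to each row constraint, a scalar of order $\|M\|_{\mathrm{op}}$ to the trace constraint, zero to the $X_{ij} \leq X_{ii}$ inequalities, and certifies PSD-cone membership via the slack $\|M\|_{\mathrm{op}} I - M \succeq 0$.

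The main obstacle is conceptual rather than calculational: the raw spectral norm $\|A_G\|_{\mathrm{op}}$ is of order $D$, so a direct trace bound $\langle A_G, X\rangle \leq \|A_G\|_{\mathrm{op}}\cdot\mathrm{tr}(X) = O(Dk)$ is far too weak. The key insight is that the SDP constraints pin down $\langle J, X\rangle = k^2$ exactly, allowing us to treat the rank-one dominant part of $A_G$ as a clean linear term and reserve the spectral bound for the much smaller residual $M$, whose eigenvalues are controlled by F\"uredi--K\'omlos. Once this decomposition is made, the rest is a routine application of the PSD trace inequality; a minor secondary point is verifying that the diagonal $pI$ we absorbed into $M$ contributes only $O(p) = o(\sqrt{D})$ to its operator norm.
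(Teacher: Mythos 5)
Your proposal is correct and is essentially the paper's own argument: the paper exhibits the dual solution $y_i = D/n$, $t = \lambda_2 + kD/n$, certified feasible by $\frac{D}{n}J - A + \lambda_2 I \succeq 0$ with $\lambda_2 = O(\sqrt{D})$ via F\"uredi--K\'omlos, which is exactly the decomposition $A = pJ + M$, $\|M\| = O(\sqrt{D})$, combined with $\langle J, X\rangle = k^2$ and $\mathrm{tr}(X) = k$ that you use. (Your derived bound $k\sqrt{D} + k^2D/n = k(\sqrt{D} + kD/n)$ matches the paper's proof and Corollary; the extra factor of $k$ in the theorem's displayed bound appears to be a typo there.)
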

\begin{proof}(of theorem ~\ref{thm:sdp})
Let us consider the dual. We have the variables $t$ (eq. \ref{p:eq1}),
$y_i$ (eq. \ref{p:eq2}), $z_{ij}$ (eq. \ref{p:eq3}). The dual is
\begin{eqnarray*}
\min~ kt && \mbox{subject to }\\
U -A \succeq 0 \\
U_{ii} = t-ky_i-\sum_j z_{ij} && \mbox{for all $i$}\\
U_{ij} = y_i+z_{ij}&& \mbox{for all $i,j$}
\end{eqnarray*}

For a random graph, note that
$$\frac{D}{n}J-A+\lambda_2 I \succeq 0$$ where $J$ is the all-ones
matrix (It is easy to see this considering two cases: the all ones
vector, and vectors orthogonal to it). Hence if we set
$y_i=\frac{D}{n}$, $t=\lambda_2+\frac{kD}{n}$ and the rest to $0$, we
get a feasible dual solution of value $\frac{k^2 D}{n} + k\lambda_2
\leq k^2 \frac{D}{n} + k\sqrt{D}$.

In the last step, we used the fact that the second eigenvalue is at
most $\sqrt{np}$ w.h.p.
\end{proof}

As an immediate corollary, we have
\begin{corollary} \label{cor:sdp}
The SDP (\ref{eqn:sdp}) can be used to distinguish between a $G(n,p)$ random graph with average degree $D=np$ and a graph with a $k$-subgraph
of density $\sqrt{D} + kD/n$.
\end{corollary}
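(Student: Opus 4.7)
The plan is to use the SDP optimum itself as the distinguishing statistic, with Theorem~\ref{thm:sdp} already providing the upper bound on the ``random'' side. So what remains is to show that in the ``planted'' case — whenever $G$ contains a $k$-subgraph $H$ of density at least (a constant times) $\sqrt{D} + kD/n$ — the SDP optimum is strictly larger than the random-graph threshold, and then threshold accordingly.

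For the planted side I would exhibit an explicit primal solution. Let $v\in\{0,1\}^n$ be the indicator vector of $V(H)$, and take $X = vv^T$. This is positive semidefinite by construction, so constraint~\eqref{p:eq4} holds. The diagonal sum $\sum_i X_{ii}$ equals $|V(H)| = k$, giving~\eqref{p:eq1}. For each $i\in V(H)$, $\sum_j X_{ij} = k = k\cdot X_{ii}$, and for $i\notin V(H)$ both sides vanish, so~\eqref{p:eq2} holds. Finally $X_{ij}\in\{0,1\}$ and $X_{ij}=1$ forces $X_{ii}=1$, so~\eqref{p:eq3} holds. The objective value is $\sum_{(i,j)\in E(G)} X_{ij} = |E(H)|$, which is $\Theta(kd)$ when $H$ has density $d$.

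To distinguish, just compute the SDP optimum (polynomial time). In the random case $G\sim G(n,p)$, Theorem~\ref{thm:sdp} bounds it by $k\bigl(\sqrt{D}+kD/n\bigr)$ with high probability. In the planted case, the construction above certifies an SDP value of at least $|E(H)| = \Omega\bigl(k(\sqrt{D}+kD/n)\bigr)$ whenever the density of $H$ exceeds $C(\sqrt{D}+kD/n)$ for a sufficiently large constant $C$. Thresholding the SDP value at, say, twice the random upper bound therefore separates the two cases with the same high-probability guarantee inherited from Theorem~\ref{thm:sdp}.

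There is essentially no obstacle here: the corollary is ``immediate'' precisely because the whole technical content — namely that $\lambda_2 = O(\sqrt{D})$ w.h.p.\ for $G(n,p)$ and that this yields a feasible dual — has already been discharged in the proof of Theorem~\ref{thm:sdp}. The only thing to be careful about is tracking the absolute constants, which only affects the constant factor in the distinguishing threshold, and the fact that the random-side bound is a high-probability rather than deterministic statement, which is inherited unchanged by the distinguishing procedure.
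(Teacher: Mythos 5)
Your proposal is correct and matches the paper's (implicit) argument exactly: the paper already notes that the rank-one indicator solution certifies SDP value at least $|E(H)|$ in the planted case, and combines this with the w.h.p.\ dual upper bound of Theorem~\ref{thm:sdp} for the random case, which is precisely your thresholding argument. The only caveat, which you correctly flag, is that the separation holds up to constant factors (note $|E(H)| = kd/2$, and the theorem's displayed bound should read $k(\sqrt{D}+kD/n)$ as in its proof).
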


Note that the distinguishing guarantee in Corollary \ref{cor:sdp}
is better than the log-density guarantee from Section
\ref{sec:dvsr} when $k>\sqrt{n}$.

\bibliographystyle{alpha}
\bibliography{DkS}
\appendix

\section{Simplifications}\label{sec:simplifications}

\subsection{Lossless Simplifications}
We present here a few simplifying assumptions which can be made with
at most a constant factor loss in the approximation guarantee. We
begin with a greedy algorithm which allows us to bound the maximum
degree in $G$, taken from~\cite{FKP} (Lemmas~3.2 and~3.3).

\begin{lemma}\label{lem:greedy} There exists a polynomial time algorithm which, given a graph $G$ containing a $k$-subgraph with average degree $d$, outputs a $k$-subgraph $H'$ and an $(n-k/2)$-subgraph $G'$ s.t.
\begin{enumerate}
  \item For some $D>0$, $G'$ has maximum degree $\leq D$ and $H'$ has average degree $\max\{\Omega(Dk/n),1\}$.
  \item Either $G'$ contains a $k$-subgraph with average degree $\Omega(d)$, or $H'$ has average degree $\Omega(d)$.
\end{enumerate}
\end{lemma}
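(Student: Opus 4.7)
I would follow the Feige--Kortsarz--Peleg greedy peeling approach. Run the procedure that repeatedly deletes the current maximum-degree vertex, for $k/2$ iterations, producing a set $R$ of discarded vertices; set $G' := G[V\setminus R]$ and $D := \Delta(G')$. Because vertex deletions never increase the maximum degree, every $v \in R$ had degree $\geq D$ at the moment it was removed, so counting each edge incident to $R$ at its first-removed endpoint gives $e_1 + e_2 \geq (k/2)D$, where $e_1 := |E(G[R])|$ and $e_2$ is the number of edges from $R$ to $V\setminus R$. Let $T \subseteq V\setminus R$ be the $k/2$ vertices with the most neighbors in $R$, and let the output $H'$ be the denser of $R \cup T$ and a matching-based backup candidate.

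For Part 1, averaging over $V\setminus R$ gives $|E(R,T)| \geq e_2\cdot (k/2)/(n-k/2)$, so (using $k \leq n$) $|E(R\cup T)| \geq e_1 + e_2 k/(2n) \geq (k/(2n))(2e_1 + e_2) \geq k^2 D/(4n)$, yielding density $\Omega(Dk/n)$. The $\Omega(1)$ floor comes from the backup: a greedy maximum matching of size up to $k/2$ together with padding gives density $\geq 1$ whenever $G$ has enough edges to admit any $k$-subgraph of density $\geq 1$ in the first place.

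For Part 2, fix a densest $k$-subgraph $H^*$ of density $d$ and split the $dk/2$ edges of $H^*$ as $x + y + z$, where $x$ counts edges with both endpoints in $R_H := R \cap V(H^*)$, $y$ those with both endpoints in $V(H^*)\setminus R$, and $z$ the crossing edges. Three sub-cases suffice: (i) if $y \geq dk/8$, then $G[V(H^*)\setminus R] \subseteq G'$ is a $(\leq k)$-subgraph of $G'$ with $\Omega(dk)$ edges, so padding with vertices of $G'$ yields a $k$-subgraph of $G'$ with density $\Omega(d)$; (ii) if $x \geq dk/16$, then $G[R] \subseteq G[H']$ already contributes $\Omega(dk)$ edges, so $H'$ has density $\Omega(d)$; (iii) otherwise $z = \Omega(dk)$, and writing $b := |V(H^*)\setminus R| \leq k$, the average $R$-degree on $V(H^*)\setminus R$ is $z/b \geq z/k$, so any $k/2$-subset of $V(H^*)\setminus R$ has $\Omega(z)$ edges to $R$; since $T$ is the global top-$k/2$ of $V\setminus R$ by $R$-degree, we conclude $|E(R,T)| = \Omega(dk)$, hence $H'$ has density $\Omega(d)$.

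The main technical subtlety lies in sub-case (iii): when $V(H^*)\setminus R$ has more than $k/2$ vertices, the algorithm's global top-$k/2$ choice of $T$ might miss most of $V(H^*)\setminus R$, and one must argue it still captures a constant fraction of the crossing edges. This follows because the top-$k/2$ sum over a superset dominates the top-$k/2$ sum over any subset, combined with the lower bound $z/b \geq z/k$ on the average $R$-degree in $V(H^*)\setminus R$.
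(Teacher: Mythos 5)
Your proof is correct and follows essentially the same route as the paper's (which is itself taken from Feige--Kortsarz--Peleg): remove the $k/2$ highest-degree vertices to get $G'$ with bounded maximum degree, pair them with the top $k/2$ vertices by degree into that set to get $H'$, and case-split on where the edges of the optimal subgraph land. Your iterative-peeling definition of $D$ is a cosmetic variant of the paper's one-shot top-$k/2$ selection, and your three-way split $x,y,z$ simply fills in the detail the paper dismisses as ``not hard to see'' for the case $\alpha\ge 1/2$.
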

\begin{proof}
Let $U$ be the set of $k/2$ vertices of highest degree in $G$
(breaking ties arbitrarily), and let $D = \min_{u\in U} {\mbox
{deg}}(u)$. Let $U'$ be the set of $k/2$ vertices of $G$ of
highest degree into $U$. Let $H'$ be the graph induced on $U \cup
U'$. Observe that $H'$ has $\Omega(Dk^2/n)$ edges and average degree $\Omega(Dk/n)$
(it also has average degree at least $1$, assuming $G$ has at least $k/2$
edges).

Now let $G'$ be the subgraph induced on $V\setminus U$ (note that by definition, $G'$ has maximum degree $\leq D$). 
Let $H$ be the densest $k$-subgraph in $G$, and let $\alpha$ be the fraction of edges of $H$
that are incident with vertices of $U$. If $\alpha \le 1/2$, then
the $dk/2$ of the edges in $H$ remain in $G'$ (and thus $G'$ still has a $k$-subgraph with average degree $\Omega(d)$.
On the other hand, if $\alpha \ge 1/2$
then it is not hard to see that the $H'$ must have average degree $\Omega(d)$. 
\end{proof}

We now justify the weakened requirement, that the algorithm should return a subgraph of size \emph{at most} $k$ (rather than exactly $k$).

\begin{lemma}\label{lem:small_k} Given an algorithm which, whenever $G$ contains a $k$-subgraph of average degree $\Omega(d)$ returns a $k'$-subgraph of average degree $\Omega(d')$, for some (non specific) $k'<k$), we can also find a (exactly) $k$-subgraph with average degree $\Omega(d')$ for such $G$.
\end{lemma}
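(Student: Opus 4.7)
The plan is to extract disjoint dense subgraphs iteratively and glue them together. Let $H_1$ be the output of the algorithm on $G$, with $|V(H_1)|=k_1 \le k$ and average degree $\Omega(d')$. If $k_1 \ge k/2$, we are done: take $V(H_1)$ together with any $k-k_1$ extra vertices of $G$; the $\Omega(d' k_1) = \Omega(d'k)$ edges of $H_1$ survive, giving a $k$-subgraph of average degree $\Omega(d')$. Otherwise, delete $V(H_1)$ from $G$ and invoke the algorithm again on the residual, obtaining $H_2$ on another $k_2 \le k$ vertices, disjoint from $V(H_1)$. Continue, building up $U_t := \bigcup_{i\le t}V(H_i)$, until the first $t$ with $|U_t|\ge k/2$.

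At that point the $H_i$'s are vertex-disjoint and each contributes $\Omega(d'k_i)$ edges, so $U_t$ has $\Omega(d'|U_t|)\ge\Omega(d'k)$ edges. If $|U_t|\ge k$, we truncate by discarding excess vertices from $H_t$ only (which costs at most the $k_t \le |U_t|-|U_{t-1}|<k$ edges of $H_t$, already accounted for by the earlier $H_i$'s). If $|U_t|<k$, we pad with arbitrary vertices, keeping all existing edges. Either way the resulting $k$-subgraph has $\Omega(d'k)$ edges and hence average degree $\Omega(d')$.

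The main obstacle is showing that each iteration $i$ with $|U_{i-1}|<k/2$ is legitimate, i.e.\ that $G\setminus U_{i-1}$ still contains a $k$-subgraph of average degree $\Omega(d)$ so that the hypothesis of the assumed algorithm holds. Here I would use the simplification (Section~\ref{sec:notation.and.simplification}) that we may assume the densest $k$-subgraph $H^*$ has \emph{minimum} degree $\Omega(d)$. In the regime $|V(H^*)\cap U_{i-1}|\le d/2$, deleting $U_{i-1}$ from $H^*$ leaves a subgraph on $\ge k/2$ vertices with minimum degree $\Omega(d)$, which can be padded inside $G\setminus U_{i-1}$ to a $k$-subgraph of average degree $\Omega(d)$. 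In the opposite regime, $U_{i-1}$ contains $\Omega(d)$ vertices of $H^*$ and therefore inherits many edges from $H^*$; the calculation then shows that the accumulated $U_{i-1}$ itself (padded out to $k$ vertices) already meets or exceeds the target density $\Omega(d')$, so we can simply stop and output it.

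I expect this last case-analysis---balancing the overlap with $H^*$ against the density of the accumulated union---to be the trickiest step; the rest is routine bookkeeping. Once it is in hand, the number of iterations is bounded by $2k$ (each $H_i$ adds at least one vertex), so the transformation is polynomial-time.
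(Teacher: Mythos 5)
Your outer loop is sound, but the approach diverges from the paper's in one crucial respect: you delete the \emph{vertices} of each found subgraph, whereas the paper deletes only its \emph{edges}. That difference is exactly what makes your ``main obstacle'' hard, and your proposed case analysis for it does not close. In your second regime ($|V(H^*)\cap U_{i-1}|>d/2$) you claim that $U_{i-1}$ ``inherits many edges from $H^*$'' and, padded to $k$ vertices, already has average degree $\Omega(d')$. Neither claim follows. First, the vertices of $H^*\cap U_{i-1}$ contribute edges to $U_{i-1}$ only if \emph{both} endpoints land in $U_{i-1}$; containing $d/2$ vertices of $H^*$ guarantees nothing about edges (e.g.\ they could all lie on one side of a bipartition of $H^*$). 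Second, the only edges you can actually count in $U_{i-1}$ are the $\Omega(d'|U_{i-1}|)$ edges of the accumulated $H_i$'s, and $|U_{i-1}|$ may be as small as $\Theta(d)$ when this case first triggers: a single $H_1$ on $\Theta(d)$ vertices lying entirely inside $V(H^*)$ puts you in case 2 with only $\Theta(d'd)$ edges, which after padding to $k$ vertices gives average degree $\Theta(d'd/k)\ll d'$ whenever $d\ll k$. So the dichotomy at threshold $d/2$ on the \emph{vertex} overlap does not partition the problem into two solvable cases.

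The paper's proof avoids this entirely: it removes from $G$ only the edges of each subgraph found (so the $H_i$ are edge-disjoint but may share vertices), and runs the dichotomy on \emph{edges of $H^*$ removed}. Either fewer than half of $H^*$'s edges have been removed, in which case the hypothesis of the black-box algorithm still holds in the residual graph; or at least $dk/4$ edges of $H^*$ have been removed --- and every removed edge is by construction an edge of some $H_i$, hence an edge \emph{inside} the union, which therefore already has average degree $\Omega(d)\geq\Omega(d')$ on at most $2k$ vertices. Under vertex deletion this accounting breaks, because an edge of $H^*$ destroyed by deleting one endpoint need not be an edge of any $H_i$, so the ``damage to $H^*$'' is not credited to the union's density. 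If you want to keep vertex-disjoint pieces, you would need a different argument for re-establishing the hypothesis on the residual; the clean fix is to switch to edge deletion and accept that the union may have up to $2k$ vertices, pruning greedily at the end as you already do.
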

\begin{proof}  Apply the algorithm repeatedly, each time removing from $G$ the edges of the
subgraph found. Continue until the union of all subgraphs found
contains at least $k$ vertices. The union of subgraphs each of average degree $d'$ has average
degree at least $d'$. Hence there is no loss in approximation ratio if we
reach $k$ vertices by taking unions of smaller graphs. Either we have not removed half the edges from the
optimal solution (and then all the subgraphs found -- and hence their union -- have average degree $\Omega(d')$), or we have removed half the edges of the optimal solution, in which case our subgraph has at least $dk/2$ edges.

Note that this algorithm may slightly overshoot (giving a subgraph on up to $2k$ vertices), in which case we can greedily prune the lowest degree vertices to get back a $k$-subgraph with the same average degree (up to a constant factor).
\end{proof}

We now treat the assumption that $G$ (and hence $H$) is bipartite (as in the rest of the paper, we define the approximation ratio of an algorithm $A$ as the maximum value attained by $|\text{OPT}(G)|/|A(G)|$):

\begin{lemma}\label{lem:bipartite} Given an $f(n)$-approximation for \dks\ on $n$-vertex bipartite graphs, we can approximate \dks\ on arbitrary graphs within a $\Omega(f(2n))$-factor.
\end{lemma}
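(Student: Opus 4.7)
The plan is to use the bipartite double cover construction. Given $G=(V,E)$ on $n$ vertices, I will build a bipartite graph $G'$ with vertex set $V_1\cup V_2$ consisting of two disjoint copies of $V$, and edge set $E'=\{\{u_1,v_2\},\{v_1,u_2\}\mid \{u,v\}\in E\}$. This gives $|V(G')|=2n$, and $G'$ is bipartite by construction.

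First I would establish the forward correspondence: any $k$-subgraph $H\subseteq V$ with $|E(H)|=e$ lifts to the bipartite $2k$-subgraph of $G'$ induced on $H_1\cup H_2$, which has exactly $2e$ edges since each edge $\{u,v\}\in E(H)$ contributes both $\{u_1,v_2\}$ and $\{v_1,u_2\}$. The average degree is preserved: $2(2e)/(2k)=2e/k$. So if $G$ contains a $k$-subgraph of average degree $d$, then $G'$ contains a bipartite $2k$-subgraph of the same average degree $d$.

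Next, I would run the given $f(n)$-approximation for bipartite \dks\ on $G'$ with parameter $2k$. Its output $H'$ is a bipartite subgraph on at most $2k$ vertices with average degree at least $d/f(2n)$. To project back to $G$, let $\pi\colon V_1\cup V_2\to V$ be the natural projection and set $H''=\pi(V(H'))\subseteq V$. Then $|H''|\leq |V(H')|\leq 2k$, and because at most two edges of $H'$, namely $\{u_1,v_2\}$ and $\{v_1,u_2\}$, can collapse to the same edge of $G$, we have $|E(G[H''])|\geq |E(H')|/2$. Hence $G[H'']$ has at most $2k$ vertices and average degree $\Omega(d/f(2n))$. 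To descend to at most $k$ vertices, I would either invoke Lemma~\ref{lem:small_k} or observe that retaining the $k$ highest-degree vertices out of $2k$ loses only a constant factor in average degree.

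The only subtle point is the bookkeeping of constants: the vertex doubling ($n\mapsto 2n$) is what produces the $f(2n)$ in the final bound, while the two-to-one edge collapse in the projection step and the $2k\to k$ pruning each cost only a constant factor that is absorbed into $\Omega(\cdot)$. There is no genuine obstacle here — the argument is standard — but one should verify that the lifted $2k$-subgraph of $G'$ is indeed a valid witness for the bipartite algorithm (immediate from the construction), and that the simplification allowing at-most-$k$ output graphs (justified earlier in Section~\ref{sec:notation.and.simplification}) handles the final rounding of $|H''|$ without further loss.
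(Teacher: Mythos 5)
Your proof is correct and follows essentially the same route as the paper: both take the bipartite double cover on $2n$ vertices, observe that the optimal $k$-subgraph lifts to a $2k$-subgraph of the same average degree, run the bipartite approximation, collapse the two sides back (losing at most a constant factor in density), and prune the possibly-too-large output down to $k$ vertices. Your explicit edge-counting bound $|E(G[H''])|\geq|E(H')|/2$ is a slightly more careful justification of the collapse step than the paper's one-line remark, but the argument is the same.
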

\begin{proof} Take two copies of the vertices of $G$, and connect copies of vertices (in two different sides) which are connected in $G$. Thus the densest $2k$-subgraph in the new bipartite graph has at least the same average degree as the densest $k$-subgraph in $G$. Now take the subgraph found by the bipartite \dks\ approximation on the new graph, and collapse the two sides (this cannot reduce the degree of any vertex in the subgraph). Note that the subgraph found may be a constant factor larger than $k$, in which case, as before, we can greedily prune vertices and lose a constant factor in the average degree.
\end{proof}

Finally, we note that it suffices to consider minimum degree in $H$,
rather than average degree.

\begin{lemma}\label{min_deg_in_H} Approximation algorithms for \dks\ lose at most a constant factor by assuming the densest $k$-subgraph has minimum degree $d$ as opposed to average degree.
\end{lemma}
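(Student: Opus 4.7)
The plan is to reduce the average-degree hypothesis to the minimum-degree hypothesis via the classical peeling argument, and then use Lemma~\ref{lem:small_k} to handle any size discrepancy. Concretely, suppose we have an approximation algorithm $A$ that, whenever $G$ contains a subgraph on at most $k$ vertices of minimum degree $d'$, outputs a subgraph of size at most $k$ with average degree $\Omega(d'/f(n))$, for some function $f$. We wish to use $A$ to obtain a comparable guarantee whenever $G$ contains a $k$-subgraph $H$ of \emph{average} degree $d$.

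The first step is to peel $H$ down to a high-minimum-degree subgraph. I would repeatedly delete from $H$ any vertex whose current degree (in the shrinking subgraph) is less than $d/4$. Each such removal discards fewer than $d/4$ edges, and at most $k$ removals can occur, so strictly fewer than $kd/4$ of the original $\geq kd/2$ edges of $H$ are lost. Hence the resulting subgraph $H'$ is nonempty, has some $k' \leq k$ vertices, and has minimum degree at least $d/4$ by construction. In particular, $G$ contains a subgraph on at most $k$ vertices of minimum degree $d/4$.

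Now I would feed $G$, $k$, and the target minimum degree $d/4$ into algorithm $A$. By the above, its hypothesis is satisfied (by the subgraph $H'$), so it returns a subgraph of size $\leq k$ with average degree $\Omega((d/4)/f(n)) = \Omega(d/f(n))$, losing only a factor of $4$. Finally I invoke Lemma~\ref{lem:small_k} to convert this into a subgraph on exactly $k$ vertices with the same average degree up to a constant factor, completing the reduction.

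The argument is essentially routine; the only point that needs a small check is that our algorithms' hypothesis naturally takes the form ``there exists a subgraph on \emph{at most} $k$ vertices with minimum degree $d$'' (the LP constraint $\sum_i y_i \leq k$ is an inequality, and the combinatorial variants are permitted to output subgraphs smaller than $k$). This ensures that $H'$ with its $k' \leq k$ vertices satisfies the hypothesis as stated, without any need for padding.
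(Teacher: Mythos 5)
Your proposal is correct and follows essentially the same route as the paper: iteratively peel low-degree vertices from $H$ (the paper uses threshold $d/2$, you use $d/4$, which only changes the constant), bound the number of edges lost to conclude a nonempty subgraph of high minimum degree on $k'\leq k$ vertices survives, and then invoke Lemma~\ref{lem:small_k} to handle the size discrepancy. No gaps.
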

\begin{proof}
Let $H$ be the densest $k$-subgraph in $G$. %
Iteratively remove from $H$ every
vertex whose remaining degree into $H$ is less than $d/2$. Since
less than $kd/2$ edges are removed by this procedure, a subgraph
on $k' \le k$ vertices and minimum degree at least $d/2$ must
remain. As noted in Lemma~\ref{lem:small_k}, the fact that now we
will be searching for subgraphs with $k' \not= k$ vertices can affect
the approximation guarantee by at most a constant factor. 
\end{proof}

\subsection{Bounding the product $kD$}
\label{sec:boundproduct}

It may simplify some of our manipulations if we can assume that
$kD \le n$. While we do not formally make this assumption anywhere,
it is often instructive to consider this case for illustrative purposes.
In fact, we can make this assumption, however at the cost of introducing randomness to the algorithm, and, more significantly, losing an $O(\sqrt{\log n})$ factor in the approximation guarantee.


Assume that contrary to our assumption, $D > n/k$. In this case,
take a random subgraph $G'$ of $G$ by keeping every edge with
probability $n/(kD)$. The new maximum degree $D'$ now satisfies
roughly $D' = Dn/(kD) = n/k$ as desired. A standard probabilistic
argument (involving the combination of a Chernoff bound and a
union bound) shows that for every vertex induced subgraph, if $d$
denotes its average degree in $G$, then its average degree in $G'$
is at most $O((1 + dD'/D)(1 + \sqrt{\log n/(1 + dD'/D)}))$.

Let $\rho$ be the approximation ratio that we are aiming at.
For $H$, the densest $k$-subgraph, we may assume that $d
\ge \rho \sqrt{\log n} kD/n$, as otherwise the greedy algorithm in
Lemma~\ref{lem:greedy}
provides a $\rho \sqrt{\log n}$ factor approximation. (This is
where our argument loses more than a constant factor in the
approximation ratio). Hence the average degree
of $H$ in $G'$ is at least $\rho \sqrt{\log n}$. An approximation
algorithm with ratio $\rho$ will then find in $G'$ a subgraph of
average degree at least $\sqrt{\log n}$. Such a subgraph must have
average degree a factor of $D/D'$ higher in $G$, giving an
approximation ratio of $\rho$.

\section{Solving the LP}\label{sec:LP-solution}

We will show that the linear program DkS-LP$_t(G,k,d)$ can be solved in time $n^{O(t)}$ by giving an equivalent formulation with explicit constraints which has total size (number of variables and constraints) $n^{O(t)}$. We start by defining a homogenized version of the basic LP. We say that a vector $(y_1,\ldots,y_n,h)$ is a solution to \textbf{DkS-hom}$_t(G,k,d)$ if it satisfies:

\begin{eqnarray*}
 \sum_{i\in V}y_i\leq kh, &\;& \text{and}\\
 h\geq 0\\
 \exists \{y_{ij}\mid i,j\in V\} &\;& \text{s.t.}\\
 \forall i\in V &\;& \sum_{j\in\Gamma(i)}y_{ij}\geq dy_i\\
 \forall i,j\in V &\;& y_{ij}=y_{ji}\\
 \forall i,j\in V &\;& 0\leq y_{ij}\leq y_i\leq h\\
 \text{if }t>1\text{, }\forall i\in V &\;& \{y_{i1},\ldots,y_{in},y_i\}\in\textbf{DkS-hom}_{t-1}(G,k,d)
\end{eqnarray*}

Note now that the condition $(y_1,\ldots,y_n)\in\text{DkS-LP}_t(G,k,d)$ is now equivalent to $(y_1,\ldots,y_n,1)\in\text{DkS-hom}_t(G,k,d)$. Moreover, we can flatten out the recursive definition of DkS-hom$_t$ to give a fully explicit LP on $n^{t+O(1)}$ variables and $n^{t+O(1)}$ constraints, which can be solved in time $n^{O(t)}$ by standard techniques.

\end{document}